\newtheorem{theorem}{\bf Theorem}[section]
\newtheorem{lemma}[theorem]{\bf Lemma}
\newtheorem{conjecture}[theorem]{\bf Conjecture}
\newtheorem{definition}{\bf Definition}[section]
\newcommand{\beq}{\begin{equation}}
\newcommand{\eeq}{\end{equation}}
\newcommand{\beqq}{\[}
\newcommand{\eeqq}{\]}
\newcommand{\barr}{\begin{array}}
\newcommand{\earr}{\end{array}}
\newcommand{\berr}{\begin{eqnarray}}
\newcommand{\eerr}{\end{eqnarray}}
\newcommand{\berrr}{\begin{eqnarray*}}
\newcommand{\eerrr}{\end{eqnarray*}}
\begin{document}

\begin{spacing}{1.494}

%
\title{Bucketing Coding and Information Theory for the Statistical High
Dimensional Nearest Neighbor Problem}
%
%
%

\author{Moshe~Dubiner
\thanks{M. Dubiner is with Google, e-mail: moshe@google.com}
\thanks{Manuscript submitted to IEEE Transactions on Information Theory on March 3 ,2007; revised August 27, 2007.}
}

\maketitle

\section*{Acknowledgment}

This paper is dedicated to my wife Edith whose support made it possible,
and to Benjamin Weiss who taught me what mathematical information means.
I also thank Uri Zwick for suggesting clarifications.

%
\IEEEpeerreviewmaketitle

\begin{abstract}


Consider the problem of finding high dimensional approximate nearest
neighbors, where the data is generated by some known probabilistic model.
We will investigate a large natural class of algorithms which we call
bucketing codes.
We will define bucketing information, prove that it bounds
the performance of all bucketing codes, and that the bucketing information
bound can be asymptotically attained by randomly constructed bucketing codes.

For example suppose we have $n$ Bernoulli(1/2) very long
(length $d\rightarrow\infty$) sequences of bits. Let $n-2m$ sequences be
completely independent, while the remaining $2m$ sequences are composed of $m$
independent pairs. The interdependence within each pair is that
their bits agree with probability $1/2<p\le 1$. It is well known how to find
most pairs with high probability by performing order of
$n^{\log_{2}2/p}$ comparisons. We will see that order of
$n^{1/p+\epsilon}$ comparisons suffice, for any $\epsilon>0$.
Moreover if one sequence out of each pair belongs to a a known set of
$n^{(2p-1)^{2}-\epsilon}$ sequences, than pairing can be done
using order $n$ comparisons!

\end{abstract}


\section{Introduction}  \label{introduction}

Suppose we have two bags of points, $X_{0}$ and $X_{1}$,
randomly distributed in a high-dimensional space. The points are independent of
each other, with one exception: there is one unknown point $x_{0}$
in bag $X_{0}$ that is significantly closer to an unknown point $x_{1}$
in bag $X_{1}$ than would be accounted for by chance. We want an efficient
algorithm for quickly finding these two 'paired' points.
More generally, one could have $m$ special pairs (up to having all points
paired). An algorithm that finds a single pair with probability $S$ will find
an expected number of $mS$ pairs, so keeping $m$ as a parameter is unnecessary.

We worked on finding texts that are translations of each other,
which is a two bags problem (the bags are languages).
In most cases there is only one bag $X_{0}=X_{1}=X$, $n_{0}=n_{1}=n $.
The two bags model is slightly more complicated,
but leads to clearer thinking. It is a bit reminiscent
of fast matrix multiplication: even when one is interested only in
square matrices, it pays to consider rectangular matrices too.

Let us start with the well known simple uniform marginally Bernoulli(1/2)
example. Suppose $X_{0},X_{1}\subset \{0,1\}^{d}$ of sizes $n_{0},n_{1}$
respectively are randomly chosen as independent Bernoulli(1/2) variables,
with one exception. Choose uniformly randomly one point $x_{0}\in X_{0}$,
xor it with a random Bernoulli($p$) vector and overwrite one uniformly chosen
random point $x_{1}\in X_{1}$. A symmetric description is to say that
$x_{0},x_{1}$ $i$'th bits have the joint probability matrix
\beq P=\left( \barr{cc}
p/2 & (1-p)/2 \\
(1-p)/2 & p/2
\earr \right) \label{ber}\eeq
for some known $1/2<p\le 1$.
In practice $p$ will have to be estimated.

Let
\beq \ln N=\ln n_{0}+\ln n_{1}-I(P)d \label{info}\eeq
where
\beq I(P)=I(p)=p\ln(2p)+(1-p)\ln(2(1-p)) \label{muty}\eeq
is the mutual information between the special pair's single coordinate values.
Information theory tells us that we can not hope to pin the special pair down
into less than $N$ possibilities, but can come close to it in some asymptotic
sense. Assume that $N$ is small. How can we find the closest pair?
The trivial way to do it is to compare all the $n_{0}n_{1}$ pairs.
A better way has been known for a long time. The earliest references I am
aware of are Karp,Waarts and Zweig \cite{KWZ95}, Broder \cite{Bro98},
Indyk and Motwani \cite{IM98}. They do not
limit themselves to this simplistic problem, but their approach clearly
handles it. Without restricting generality let $n_{0}\le n_{1}$.
Randomly choose
\beq k \approx \log_{2}n_{0} \eeq
out of the $d$ coordinates, and compare the point pairs which agree on
these coordinates (in other words, fall into the same bucket).
The expected number of comparisons is
\beq n_{0}n_{1}2^{-k}\approx n_{1} \eeq
while the probability of success of one comparison is $p^{k}$.
In case of failure try again, with other random k coordinates. At first
glance it might seem that the expected number of tries until success
is $p^{-k}$, but that is not true because the attempts are interdependent.
An extreme example is $d=k$, where the attempts are identical.
In the unlimited data case $d\rightarrow\infty$ the expected number of tries
is indeed $p^{-k}$, so the expected number of comparisons is
\beq W\approx p^{-k}n_{1}\approx n_{0}^{\log_{2}1/p}n_{1} \eeq
Is this optimal? Alon \cite{NA} has suggested the possibility of improvement
by using Hamming's perfect code.

We have found that in the $n_{0}=n_{1}=n$
case, $W\approx n^{\log_{2}2/p}$ can be reduced to
\beq W\approx n^{1/p+\epsilon} \eeq
for any $1/2<p<1$, $\epsilon>0$. This particular algorithm is described in
the next section. Amazingly it is possible to characterize the
asymptotically best exponent not only for this problem,
but for a much larger class. We allow non binary discrete data, a limited
amount of data ($d<\infty$) and a general probability distribution of each
coordinate.

We will prove theorem \ref{big}, a lower bound on the work performed by any
bucketing algorithm. It employs a newly defined $\mathbf{bucketing}$
$\mathbf{information}$ function $I(P,\lambda_{0},\lambda_{1},\mu)$, which
generalizes Shannon's mutual information function $I(P)=I(P,1,1,\infty)$.
Comparing (\ref{info}) with theorem \ref{big} shows that the mutual
information's function generalizes as well. Bucketing algorithms approaching
the information bound are constructed by random coding. The analogy with
Shannon's coding and information theory is very strong, suggesting that maybe
we are redoing it in disguise. If it is a disguise, it is quite effective.
Coding with distortion theory seems also related.
There is related work \cite{MD1}, which tackles a particular class of
practical bucketing algorithms (lexicographic forest algorithms).
Their performance turns out to be bounded by a $\mathbf{bucketing}$
$\mathbf{forest}$ $\mathbf{information}$ function, and that bound is
asymptotically attained by a specific practical algorithm.

\section{An Asymptotically Better Algorithm} \label{better}

The following algorithm does not generalize well, but makes sense
for the uniform marginally Bernoulli(1/2) problem (\ref{ber}) with $1/2<p<1$.
Let $0<d_{0}\le d$ be some natural numbers.
We construct a $d$ dimensional bucket in the following way.
Choose a random point $b\in\{0,1\}^{d}$. The bucket contains all points
$x\in\{0,1\}^{d}$ such for exactly $d_{0}-1$ or $d_{0}$  coordinates $i$
$x_{i}=b_{i}$. (It is even better to allow $d_{0}-1,\ldots,d$, but
the analysis gets a little messy.) The algorithm uses $T$ such buckets,
independently chosen. The probability of a point $x$ falling into a bucket is
\beq p_{A*}=\left(\barr{cc}d\\d_{0}-1\earr\right)2^{-d}+
\left(\barr{cc}d\\d_{0}\earr\right)2^{-d} \eeq
Let the number of points be
\beq n_{0}=n_{1}=n=\lfloor 1/p_{A*} \rfloor \eeq
This way the expected number of comparisons (point pairs in the same bucket) is
\beq T(np_{A*})^{2}\le T \eeq
The probability that both special pair points fall at least once into
the same bucket is
\beq S=\sum_{m=0}^{d}\left(\barr{cc}d\\m\earr\right)p^{d-m}
(1-p)^{m}\left[1-\left(1-S_{m}\right)^{T}\right] \eeq
\beq S_{m}=2^{-d}\left(\barr{cc}m\\ \lfloor m/2\rfloor\earr\right)
\left[\left(\barr{cc}d-m\\d_{0}-\lceil m/2\rceil\earr\right)+
\left(\barr{cc}d-m\\d_{0}-\lceil (m+1)/2\rceil\earr\right)\right] \eeq
The explanation follows. In these formulas $m$ is the number of coordinates
$i$ at which the special pair values disagree: $x_{0,i}\neq x_{1,i}$.
Consider the special pair fixed. There are $2^{d}$ possible baskets,
independently chosen. Consider one basket. For $j,k=0,1$ denote by $m_{jk}$ the
number of coordinates $i$ such that $x_{0,i}\oplus b_{i}=j$ and
$x_{0,i}\oplus x_{1,i}=k$ where $\oplus$ is the xor operation.
We know that $m_{01}+m_{11}=m$ and $m_{00}+m_{10}=d-m$.
Both $x_{0},x_{1}$ fall into the basket iff
$m_{00}+m_{01}=d_{0}-1,d_{0}$ and $m_{00}+m_{11}=d_{0}-1,d_{0}$.
There are two possibilities
\beq \left( \barr{cc}m_{00} & m_{01} \\m_{10} & m_{11}\earr \right)=
\left( \barr{cc}d_{0}-\lceil m/2\rceil & \lfloor m/2\rfloor \\
d-d_{0}-\lfloor m/2\rfloor & \lceil m/2\rceil \earr \right) \eeq
\beq \left( \barr{cc}m_{00} & m_{01} \\m_{10} & m_{11}\earr \right)=
\left( \barr{cc}d_{0}-\lceil (m+1)/2\rceil & \lceil m/2\rceil \\
d-d_{0}-\lfloor (m-1)/2\rfloor & \lfloor m/2\rfloor \earr \right) \eeq
each providing
\beq \left(\barr{cc}m_{00}+m_{10}\\m_{00}\earr\right)
\left(\barr{cc}m_{01}+m_{11}\\m_{01}\earr\right) \eeq
buckets.

Clearly $m$ obeys a Bernoulli$(1-p)$ distribution, so by
Chebyshev's inequality
\beq S\ge \min_{|m-(1-p)d|<\sqrt{p(1-p)d/\epsilon}}\left(1-e^{-TS_{m}}
-\epsilon \right) \eeq
for any $0<\epsilon<1$. Hence taking
\beq T=\lceil-\ln\epsilon/\min_{|m-(1-p)d|<\sqrt{p(1-p)d/\epsilon}}S_{m}
\rceil \eeq
guaranties a success probability $S\ge 1-2\epsilon$.
What is the relationship between $n$ and $T$? Let
\beq d_{0}\sim (1+\rho)d/2,\quad d\rightarrow\infty \eeq
By Stirling's approximation
\beq \lim\frac{\ln n}{d}=I\left(\frac{1+\rho}{2}\right) \eeq
\beq \lim\frac{\ln T}{d}=pI\left(\frac{1+\rho/p}{2}\right) \eeq
Letting $\rho\rightarrow 0$ results in exponent
\beq \lim\frac{\ln T}{\ln n}=\frac{1}{p} \eeq

We are not yet finished with this algorithm, because the number of comparisons
is not the only component of work. One also has to throw the points into
the baskets. The straightforward way of doing it is to check the point-basket
pairs. This involves $2nT$ checks, which is worse than the naive $n^2$
algorithm! In order to overcome this, we take the $k$'th tensor power
of the previous algorithm. That means throwing $n^{k}$ points in
$\{0,1\}^{kd}$ into $T^{k}$ buckets, by dividing the coordinates into
$k$ blocks of size $d$. The success probability is $S^{k}$, the expected
number of comparisons is at most $T^{k}$, but throwing the points into
the baskets takes only an expected number of $2n^{k}T$ vector operations
(of length $kd$). Hence the total expected number of vector operations
is at most
\beq T^{k}+2n^{k}T \eeq
At last taking
\beq k=\lceil 1/(1-p)\rceil \eeq
lets us approach the promised exponent $1/p$.

\section{The Probabilistic Model}

\begin{definition}
The pairwise independent identically distributed data model is the following.
Let the sets
\beq X_{0}\subset \{0,1,\ldots,b_{0}-1\}^{d},\quad
X_{1}\subset \{0,1,\ldots,b_{1}-1\}^{d} \eeq
of cardinalities
$\#X_{0}=n_{0},\ \ \#X_{1}=n_{1}$
be randomly constructed using the probability matrix
\beq P=\left(\barr{llll} p_{00}&p_{01}&\ldots&p_{0\ b_{1}-1}\\
p_{10}&p_{11}&\ldots&p_{0\ b_{1}-1}\\
\vdots&\vdots&\ddots&\vdots\\
p_{b_{0}-1\ 0}&p_{b_{0}-1\ 1}&\ldots&p_{b_{0}-1\ b_{1}-1}\\
\earr\right) \eeq
\beq p_{jk}\ge 0,\quad \sum_{j=0}^{b_{0}-1}\sum_{k=0}^{b_{1}-1}p_{jk}=1 \eeq
The $X_{0}$ points are identically distributed pairwise independent Bernoulli
random vectors, with
\beq p_{j*}=\sum_{k=0}^{b_{1}-1}p_{jk} \eeq
probability that coordinate $i$ has value $j$.
The probability of a single point $x\in X_{0}$ is
\beq p_{x*}=\prod_{i=1}^{d}p_{x_{i}*} \eeq
and the probability of a set $B_{0}\subset X_{0}$ is of course
\beq p_{B_{0}*}=\sum_{x\in B_{0}}p_{x*} \eeq
Similarly $X_{1}$ is governed by
$p_{*k}=\sum_{j=0}^{b_{0}-1}p_{jk}$
There is a special pair of $X_{0},X_{1}$ points, uniformly chosen out of the
$n_{0}n_{1}$ possibilities. For that pair the probability that their
$i$'th coordinates are $j,k$ is $p_{jk}$ and for
$x_{0}\in X_{0},\ x_{1}\in X_{1}$
\beq p_{x_{0}x_{1}}=\prod_{i=1}^{d}p_{x_{0,i}x_{1,i}} \eeq
\end{definition}

Coding and information theory were initially developed for a similar model
(with a probability vector instead of a probability matrix). Extension to
non-uniform matrices, a stationary model with coordinate dependency, or
continuous data is possible, as was done for coding and information theory.

\section{Comparison with the Indyk-Motwani Analysis}

The Indyk-Motwani paper \cite{IM98} introduces a metric based, worst case
analysis. In general no average work upper bound can replace a worst case
work upper bound, and the reverse holds for lower bounds. Still some
comparison is unavoidable. Let us consider the uniform marginally
Bernoulli(1/2) problem with $d\rightarrow\infty$. We saw that the classical
approach requires $W\approx n^{\log_{2}2/p}$, and have reduced it to
$W\approx n^{\epsilon+1/p}$. What is the Indyk-Motwani bound? The Hamming
distance between two random points is approximately $d/2$ (the ratio to $d$
tends to $1/2$ as $d$ grows, according to the law of large numbers). The
Hamming distance between two related points is approximately $(1-p)d$.
Hence the distance ratio is $c=1/(2-2p)$ and the Indyk-Motwani work is
\beq W\approx n^{1+1/c}=n^{3-2p} \eeq
It can be argued that the drop in performance is offset by the lack of pairwise
independence assumptions.
The $n^{\frac{2}{1+e^{-1/c}}}=n^{\frac{2}{1+e^{2p-2}}}$ lower bound
of Motwani, Naor and Panigrahy \cite{MNP06} is interesting, but increasing
it to $n^{1/p}$ seems a challenge.

Now let us consider a typical sparse bits matrix: for a small $\epsilon$ let
\beq P=\left( \barr{cc}
1-3\epsilon & \epsilon \\
\epsilon & \epsilon
\earr \right) \eeq
The standard bucketing approach is to arrange the coordinates randomly and
hash each point by its first $k$ $1$'ns, where
$ k\approx -\ln n/\ln 2\epsilon $.
The probability that two unrelated points fall into the same bucket is less
than $(2\epsilon)^{k}\approx 1/n$, so the expected work per try is
approximately $n$. The probability that the two related points
fall into the same basket is at least
\beq \left(\barr{cc}m\\k\earr\right)(1-3\epsilon)^{m-k}\epsilon^{k}=
\left(\barr{cc}m\\k\earr\right)(1-3\epsilon)^{m-k}(3\epsilon)^{k}\cdot3^{-k}
\eeq
for any $m\ge k$ (consider the first $m$ coordinates). Taking
$m\approx k/3\epsilon$ shows that the success probability per try is at least
approximately $3^{-k}\approx n^{\ln 3/\ln 2\epsilon}$. Hence in order to
succeed we will make $n^{-\ln 3/\ln 2\epsilon}$ tries, and the total expected
work is
\beq W\approx n^{1+\frac{\ln 3}{\ln 1/2\epsilon}} \eeq
In contrast the Hamming distance between random points is approximately
$2(1-2\epsilon)2\epsilon d$ and the Hamming distance between two related points
is approximately $2\epsilon d$, so the Indyk-Motwani distance ratio is
$c=2(1-2\epsilon)\approx 2$ and
\beq W\approx n^{1+1/c}\approx n^{3/2} \eeq
This worst case bound does not preclude the possibility that the random
projections approach recommended for sparse data by Datar Indyk Immorlica and
Mirrokni \cite{DIIM04} performs better. Their optimal choice
$r\rightarrow\infty$ results in a binary hash function
$ h(x)={\rm sign}\left(\sum_{i=1}^dx_{i}C_{i}\right) $
where $(x_{1},x_{2},\ldots,x_{d})\in X$ is a any point and
$C_{1},C_{2},\ldots,C_{d}$ are independent Cauchy random variables (density
$\frac{1}{\pi(1+z^{2})}$). Both $\pm 1$ values have probability
$1/2$, so one has to concatenate $k\approx\log_{2}n$ binary hash functions
in order to determine a bucket. Now consider two related points. They will have
approximately $\epsilon d$ $1$'ns in common, and each will have approximately
$\epsilon d$ $1$'ns where the other has zeroes. The sum of $\epsilon d$
independent Cauchy random variables has the same distribution as $\epsilon d$
times a single Cauchy random variable, so the probability that the two related
points get the same hash bit is approximately
\beq {\rm Prob}\left\{{\rm sign}\left(C_{1}+C_{2}\right)=
{\rm sign}\left(C_{1}+C_{3}\right)\right\}=2/3 \eeq
Hence amount of work is large:
\beq W\approx n(3/2)^{k}\approx n^{\log_{2}3} \eeq

We have demonstrated that the probabilistic model adds to the current
understanding of the approximate nearest neighbor problem. This is no surprise,
since it is the standard model of information theory.

\section{Bucketing Codes}

Assume that there is enough information to identify the special pair.
How much work is necessary? Comparing all $n_{0}n_{1}$ point pairs suffice.
All the effective known nearest neighbor algorithms are bucketing algorithms,
so will limit ourselves to these. But what are bucketing algorithms?
One could compute $m_{0},m_{1}$ in some complicated way from the data, and then
throw the $m_{0}$'th point of $X_{0}$ and the $m_{1}$'th point of $X_{1}$ into
a single bucket. It is unlikely to work, but can you prove it?
In order to disallow such knavery we will insist on data independent buckets.
Most practical bucketing algorithms are data dependent. That is necessary
because the data is used to construct (usually implicitly) a data model.
We suspect that when the data model is known, there is little to be gained
by making the buckets data dependent.

\begin{definition}
Assume the i.i.d. data model. A bucketing code is a set of $T$ subset pairs
\beqq (B_{0,0},B_{1,0}),\ldots,
(B_{0,T-1},B_{1,T-1})\subset X_{0}\times X_{1} \label{buckpair}\eeqq
Its success probability is
\beq S=p_{\cup_{t=0}^{T-1}B_{0,t}\times B_{1,t}} \eeq
and for any real numbers $n_{0},n_{1}>0$ its work is
\beqq W=\sum_{t=0}^{T-1}\max\left(n_{0}p_{B_{0,t}*},n_{1}p_{*B_{1,t}},
n_{0}p_{B_{0,t}*}n_{1}p_{*B_{0,t}}\right) \eeqq
\end{definition}

The meaning of success is obvious, but work has to be explained.
In the above definition we consider $n_{0},n_{1}$ to be the expected
number of $X_{0},X_{1}$ points, so they are not necessarily integers.
The simplest implementation of a bucketing code is to store it as two point
indexed arrays of lists. The first array of size $b_{0}^{d}$ keeps for each
point $x\in\{0,1,\ldots,b_{0}-1\}^{d}$ the list of buckets (from $0$ to $T-1$)
which contain it. The second array of size $b_{1}^{d}$ does the same for
the $B_{1,t}$'s.
When we are given $X_{0}$ and $X_{1}$ we look each element up, and accumulate
pointers to it in a buckets array of $k$ lists of pointers.
Then we compare the pairs in each of the $k$ buckets.
Let us count the expected number of operations. The expected number of buckets
containing any specific $X_{0}$ point is
$\sum_{t=0}^{T-1}p_{B_{0,t}*}$, so the $X_{0}$ lookup involves an
order of
$n_{0}+n_{0}\sum_{t=0}^{T-1}p_{B_{0,t}*}$
operations. Similarly the $X_{1}$ lookup takes
$n_{1}+n_{1}\sum_{t=0}^{T-1}p_{*B_{1,t}}$
The probability that a specific random pair falls into bucket $t$ is
$p_{B_{0,t}*}p_{*B_{1,t}}$, so the expected number of comparisons is
$n_{0}p_{B_{0,t}*}n_{1}p_{*B_{1,t}}$
It all adds up to
\beq n_{0}+n_{1}+\sum_{t=0}^{T-1}\big[n_{0}p_{B_{0,t}*}+
n_{1}p_{*B_{1,t}}+n_{0}p_{B_{0,t}*}n_{1}p_{*B_{1,t}}\big]
\le n_{0}+n_{1}+3W \eeq

The fly in the ointment is that for even moderate dimension $d$ the memory
requirements of the previous algorithm are out of the universe. Hence it can
be used only for small $d$. Higher dimensions can be handled by splitting
them up into short blocks, or by more sophisticated coding algorithms.

\section{Basic Results}

\begin{definition}
For any nonnegative matrix or vector $R$, and a probability matrix or vector
$P$ of the same dimensions $b_{0}\times b_{1}$, let the extended
Kullback-Leibler divergence be
\beq K(R\|P)=\sum_{j=0}^{b_{0}-1}\sum_{k=0}^{b_{1}-1}
r_{jk}\ln\frac{r_{jk}}{r_{**}p_{jk}}\ge 0 \eeq
where $r_{**}=\sum_{j=0}^{b_{0}-1}\sum_{k=0}^{b_{1}-1}r_{jk}$
\end{definition}

Non-negativity follows from the well known inequality:
\begin {lemma} \label{fun}
For any nonnegative $q_{0},q_{1},\ldots,q_{b-1}\ge 0$,
$p_{0},p_{1},\ldots,p_{b-1}\ge 0$
\beq \sum_{j=0}^{b-1}q_{j}\ln\frac{q_{j}}{p_{j}}\ge
q_{*}\ln\frac{q_{*}}{p_{*}} \eeq
where $q_{*}=\sum_{j=0}^{b-1}q_{j}\ ,\quad p_{*}=\sum_{j=0}^{b-1}p_{j}$
\end{lemma}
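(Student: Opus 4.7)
My plan is to reduce the inequality to the non-negativity of the ordinary Kullback-Leibler divergence between two probability vectors, which follows from the elementary estimate $\ln x \le x - 1$ for $x > 0$.

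First I would dispose of the degenerate cases with the standard conventions $0 \ln 0 = 0$ and $0 \ln(0/0) = 0$. If $q_* = 0$, both sides equal $0$. If some $p_j = 0$ while $q_j > 0$, the left hand side is $+\infty$ and the inequality is trivial, so I may assume $p_j = 0 \Rightarrow q_j = 0$ and drop those indices, leaving $p_* > 0$ and $q_* > 0$ with all $p_j > 0$.

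Next I would set $\tilde q_j = q_j/q_*$ and $\tilde p_j = p_j/p_*$, so that $\tilde q, \tilde p$ are probability vectors. A direct substitution gives
\[
\sum_{j=0}^{b-1} q_j \ln\frac{q_j}{p_j} \;-\; q_*\ln\frac{q_*}{p_*}
\;=\; q_* \sum_{j=0}^{b-1} \tilde q_j \ln\frac{\tilde q_j}{\tilde p_j},
\]
so the claim reduces to showing $\sum_j \tilde q_j \ln(\tilde q_j/\tilde p_j) \ge 0$, i.e.\ $K(\tilde q\|\tilde p)\ge 0$ for probability vectors.

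For the latter I would apply $\ln t \le t-1$ (valid for all $t>0$) to $t = \tilde p_j/\tilde q_j$ on indices where $\tilde q_j > 0$, obtaining
\[
-\sum_{j:\tilde q_j>0} \tilde q_j \ln\frac{\tilde p_j}{\tilde q_j}
\;\ge\; -\sum_{j:\tilde q_j>0} \tilde q_j\!\left(\frac{\tilde p_j}{\tilde q_j}-1\right)
\;=\; \sum_{j:\tilde q_j>0} \tilde q_j - \sum_{j:\tilde q_j>0} \tilde p_j
\;\ge\; 1 - 1 \;=\; 0,
\]
using $\sum_j \tilde q_j = 1$ and $\sum_{j:\tilde q_j>0}\tilde p_j \le \sum_j \tilde p_j = 1$. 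This gives the required non-negativity.

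There is no real obstacle here; the only mild subtlety is careful bookkeeping of the indices where $p_j$ or $q_j$ vanishes, and the fact that equality in $\ln t \le t - 1$ holds only at $t = 1$, which yields the well-known equality condition $\tilde q = \tilde p$ (i.e.\ $q$ and $p$ are proportional) should one wish to record it.
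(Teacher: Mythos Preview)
Your proof is correct and is the standard argument: normalize to probability vectors and invoke the non-negativity of the Kullback--Leibler divergence via $\ln t\le t-1$. The paper itself does not prove this lemma at all; it is introduced with the phrase ``the well known inequality'' and used as a black box throughout, so there is no competing proof to compare against. Your handling of the degenerate cases ($q_*=0$, or $p_j=0$ with $q_j>0$) is exactly the kind of bookkeeping one would want spelled out, and the equality condition you note at the end is a nice bonus.
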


\begin{definition}
Suppose $P$ is a probability matrix. We write that
$\lambda_{0},\lambda_{1}\le 1\le\lambda_{0}+\lambda_{1}$ are $P$
$\mathbf{sub-conjugate}$ to each other, denoted by
$I(P,\lambda_{0},\lambda_{1},1)=0$, iff for any probability matrix $Q$
of the same dimensions as $P$
\beq K(Q_{\cdot\cdot}\|P_{\cdot\cdot})\ge
\lambda_{0}K(Q_{\cdot *}\|P_{\cdot *})+
\lambda_{1}K(Q_{* \cdot}\|P_{* \cdot}) \label{sub}\eeq
\end{definition}

Explicitly
\beq \sum_{j=0}^{b_{0}-1}\sum_{k=0}^{b_{1}-1}q_{jk}\ln\frac{q_{jk}}{p_{jk}}
\ge\lambda_{0}\sum_{j=0}^{b_{0}-1}q_{j*}\ln\frac{q_{j*}}{p_{j*}}+
\lambda_{1}\sum_{k=0}^{b_{1}-1}q_{*k}\ln\frac{q_{*k}}{p_{*k}} \eeq
where $q_{j*}=\sum_{k=0}^{b_{1}-1}q_{jk}$ etc.
The set of $P$ sub-conjugate pairs is convex by definition.

We will prove in the section \ref{book}
\begin{theorem} \label{direct}
For any bucketing code with probability matrix $P$, set sizes $n_{0},n_{1}$,
success probability $S$ and work $W$
\beq W\ge S\sup_{\lambda_{0},\lambda_{1}\le 1\le\lambda_{0}+\lambda_{1},
\ I(P,\lambda_{0},\lambda_{1},1)=0} n_{0}^{\lambda_{0}}n_{1}^{\lambda_{1}} \eeq
\end{theorem}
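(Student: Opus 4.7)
The plan is to localize the bound to individual buckets and then prove a Hölder-type coverage inequality on each bucket via a dual (hypercontractive) reformulation of sub-conjugacy tensorized to $P^d$.

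First I would reduce to a single bucket. Since $W=\sum_{t=0}^{T-1}w_t$ by definition and $S\le\sum_{t=0}^{T-1}p_{B_{0,t}\times B_{1,t}}$ by the union bound, it suffices to prove per-bucket that $w\ge n_0^{\lambda_0}n_1^{\lambda_1}\,p_{B_0\times B_1}$ for each $(\lambda_0,\lambda_1)$ with $\lambda_0,\lambda_1\le 1\le\lambda_0+\lambda_1$ and $I(P,\lambda_0,\lambda_1,1)=0$. Under those conditions, the three exponents $(1-\lambda_1,\,1-\lambda_0,\,\lambda_0+\lambda_1-1)$ are nonnegative and sum to $1$, so the weighted AM-GM inequality applied to the three arguments of $w$ gives
\[
w \ge (n_0 p_{B_0*})^{1-\lambda_1}(n_1 p_{*B_1})^{1-\lambda_0}(n_0 n_1 p_{B_0*}p_{*B_1})^{\lambda_0+\lambda_1-1} = n_0^{\lambda_0}n_1^{\lambda_1}\,p_{B_0*}^{\lambda_0}\,p_{*B_1}^{\lambda_1}.
\]
The theorem therefore reduces to the per-bucket coverage inequality $p_{B_0\times B_1}\le p_{B_0*}^{\lambda_0}\,p_{*B_1}^{\lambda_1}$.

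For the coverage inequality, the central step, I would first apply Legendre (Donsker-Varadhan) duality to the $K$-divergence formulation of sub-conjugacy. This converts $I(P,\lambda_0,\lambda_1,1)=0$ into the equivalent hypercontractive inequality, valid for all real functions $\phi_0,\phi_1$:
\[
\sum_{j,k}p_{jk}\,e^{\lambda_0\phi_0(j)+\lambda_1\phi_1(k)} \;\le\; \Bigl(\sum_j p_{j*}\,e^{\phi_0(j)}\Bigr)^{\lambda_0}\Bigl(\sum_k p_{*k}\,e^{\phi_1(k)}\Bigr)^{\lambda_1}.
\]
Next I would tensorize this inequality to the product measure $P^d$ by induction on $d$. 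At the inductive step, separate the last coordinate: for each fixed prefix $(y_0,y_1)$ of the first $d-1$ coordinates, apply the single-coordinate hypercontractive inequality to the sections $\phi_0(j):=\Phi_0(y_0,j)$ and $\phi_1(k):=\Phi_1(y_1,k)$, obtaining an integrand of the form $F(y_0)^{\lambda_0}\,G(y_1)^{\lambda_1}$; then apply the inductive hypercontractive inequality for $P^{d-1}$ with outer functions $\ln F$ and $\ln G$. This closes the induction and yields the product-form hypercontractive inequality on $P^d$.

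To deduce the coverage inequality, I substitute $\Phi_0=t\,\mathbf{1}_{B_0}$ and $\Phi_1=t\,\mathbf{1}_{B_1}$ into the tensorized hypercontractive inequality. The left side expands as a sum of four joint probabilities weighted by powers of $e^t$, with leading term $p_{B_0\times B_1}\,e^{(\lambda_0+\lambda_1)t}$; the right side equals $(1+p_{B_0*}(e^t-1))^{\lambda_0}(1+p_{*B_1}(e^t-1))^{\lambda_1}$, with leading behavior $p_{B_0*}^{\lambda_0}p_{*B_1}^{\lambda_1}\,e^{(\lambda_0+\lambda_1)t}$. Dividing by $e^{(\lambda_0+\lambda_1)t}$ and letting $t\to\infty$ extracts exactly the coverage inequality. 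Combining with the per-bucket work bound yields $W\ge n_0^{\lambda_0}n_1^{\lambda_1}S$, and taking the supremum over admissible $(\lambda_0,\lambda_1)$ completes the theorem. The main obstacle is the Legendre equivalence between sub-conjugacy and the hypercontractive inequality, together with its tensorization; the equivalence is a routine convex-analysis computation, and tensorization works because the hypercontractive inequality holds for arbitrary $\phi_0,\phi_1$, allowing iteration with the outer logs.
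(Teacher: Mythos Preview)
Your proposal is correct and shares the paper's overall architecture: localize to individual buckets, bound each $w_t$ below by $n_0^{\lambda_0}n_1^{\lambda_1}p_{B_{0,t}*}^{\lambda_0}p_{*B_{1,t}}^{\lambda_1}$ via the convex-combination (AM--GM) argument, and then invoke the coverage inequality $p_{B_0\times B_1}\le p_{B_0*}^{\lambda_0}p_{*B_1}^{\lambda_1}$ together with the union bound on $S$. The only genuine difference is the route to the coverage inequality and its tensorization. The paper stays entirely on the KL side: it tensorizes sub-conjugacy directly by the chain rule for relative entropy plus the log-sum inequality (Lemma~\ref{fun}), and then for a single bucket simply plugs $Q = P(\,\cdot\mid B_0\times B_1)$ into the sub-conjugacy inequality~(\ref{sub}), which yields the coverage bound in one line. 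You instead pass through the Legendre dual, recasting sub-conjugacy as a two-function hypercontractive (MGF) inequality, tensorize that inductively, and extract the coverage bound by substituting scaled indicators and letting $t\to\infty$. Both arguments are sound; the paper's is shorter and needs no duality machinery, while yours makes the connection to hypercontractivity explicit and may be more familiar to readers from that tradition.
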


The following inverse result is a special case of theorem \ref{asy}
\begin{theorem} \label{inverse}
For any probability matrices $P,Q$, a scalar $\epsilon>0$ and large $N$ there
exists a bucketing code for matrix $P$, set sizes
$n_{0}=\lfloor N^{K(Q_{\cdot *}\|P_{\cdot *})}\rfloor$,
$n_{1}=\lfloor N^{K(Q_{* \cdot}\|P_{* \cdot})}\rfloor$,
with success probability $S\ge 1-\epsilon$ and work
$W\le N^{\epsilon + K(Q\|P)}$.
\end{theorem}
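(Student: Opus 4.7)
The plan is a random coding argument, analogous to Shannon's channel coding theorem, with the auxiliary distribution $Q$ playing the role of the input distribution and $K(Q\|P)$ the role of the rate gap above capacity. I would fix a block length $d$ for which $dQ$ has (nearly) integer entries---the rounding contributing only an $o(1)$ correction absorbable into $\epsilon$---set $T=\lceil N^{\epsilon/2+K(Q\|P)}\rceil$, and construct the $T$ buckets independently at random. For each $t$ I draw a template pair $(u_{t},v_{t})$ uniformly from the joint type class $\mathcal{T}_{Q}^{d}\subset\{0,\ldots,b_{0}-1\}^{d}\times\{0,\ldots,b_{1}-1\}^{d}$ and declare $A_{0,t}$ to consist of all $x_{0}\in\{0,\ldots,b_{0}-1\}^{d}$ such that $(x_{0},v_{t})$ has joint type in a $Q$-shaped neighborhood that still permits $x_{0}$ to retain its natural $P_{\cdot *}$-marginal; $A_{1,t}$ is defined symmetrically from $(u_{t},x_{1})$, and the actual code is $B_{i,t}=A_{i,t}\cap X_{i}$.

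The expected work is controlled by the method of types. Standard type-class sizing combined with Lemma \ref{fun} yields $p_{A_{0,t}*}\le\exp(-dK(Q_{\cdot *}\|P_{\cdot *})+O(\ln d))$ and $p_{*A_{1,t}}\le\exp(-dK(Q_{*\cdot}\|P_{*\cdot})+O(\ln d))$. With $n_{0},n_{1}$ as prescribed, each of the three terms $n_{0}p_{A_{0,t}*}$, $n_{1}p_{*A_{1,t}}$, $n_{0}p_{A_{0,t}*}n_{1}p_{*A_{1,t}}$ is $O(\mathrm{poly}(d))$; summing over the $T$ buckets and absorbing the polynomial prefactor into the $N^{\epsilon/2}$ slack yields an expected work at most $N^{\epsilon+K(Q\|P)}$.

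The success analysis is the heart of the matter. By Chebyshev, the special pair's joint empirical type lies within any prescribed neighborhood of $P$ with probability at least $1-\epsilon/4$. Conditioning on this, the probability over the random template that a single bucket simultaneously captures both $x_{0}$ and $x_{1}$ equals the fraction of $(u,v)\in\mathcal{T}_{Q}^{d}$ for which $(x_{0},v)$ and $(u,x_{1})$ both land in the designated $Q$-shaped classes. Enumerating four-way joint types for $(x_{0},u,x_{1},v)$ subject to the constraints and applying Lemma \ref{fun} at the optimizer, the constrained entropy maximum delivers a per-bucket lower bound of $\exp(-dK(Q\|P)-o(d))$; independence across the $T$ buckets then pushes the overall capture probability above $1-\epsilon/2$, and a routine Markov-plus-union derandomization extracts a specific realization meeting both $W\le N^{\epsilon+K(Q\|P)}$ and $S\ge 1-\epsilon$.

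The main obstacle is the four-way type computation: showing that the constrained entropy maximum collapses to exactly $K(Q\|P)$ in the exponent, and not to something larger. This reduces to an optimization over joint distributions $R$ on $(x_{0},u,x_{1},v)$ with marginals $R_{x_{0}x_{1}}=P$ and $R_{uv}=R_{x_{0}v}=R_{ux_{1}}=Q$; I expect the minimum to be attained at an $R$ exhibiting the conditional independence $x_{0}\perp x_{1}\mid(u,v)$, at which point Lemma \ref{fun} reduces the expression to $K(Q\|P)$. A secondary technical point is tuning the exact type-class neighborhoods inside $A_{i,t}$ so that the divergence bound on $p_{A_{0,t}*}$ still reads $K(Q_{\cdot *}\|P_{\cdot *})$ while the capture exponent remains at $K(Q\|P)$. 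The general Theorem \ref{asy}, of which this is a special case, presumably orchestrates these constraints into a single optimization over the sub-conjugate parameters $(\lambda_{0},\lambda_{1},\mu)$, so one could alternatively derive this statement by specializing that framework to $\mu=\infty$ and the particular $(\lambda_{0},\lambda_{1})$ that make $K(Q\|P)$ the tight exponent.
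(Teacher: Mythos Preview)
Your template-based approach is more complicated than what the paper does, and the ``main obstacle'' you flag is a genuine gap that the paper's construction sidesteps entirely. In the proof of Theorem~\ref{asy} (of which Theorem~\ref{inverse} is the single-$i$ special case) there are no templates: the base bucket pair is simply the product of marginal type classes,
\[
B_{0}=\{x_{0}:\text{type}(x_{0})=Q_{\cdot *}\},\qquad
B_{1}=\{x_{1}:\text{type}(x_{1})=Q_{*\cdot}\},
\]
so $p_{B_{0}*}\approx e^{-dK(Q_{\cdot *}\|P_{\cdot *})}$ and $p_{*B_{1}}\approx e^{-dK(Q_{*\cdot}\|P_{*\cdot})}$ by ordinary type counting, which fixes $n_{0},n_{1}$ and makes the per-bucket work $\le 1$. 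For success one lower-bounds by the event that the special pair's \emph{joint} empirical type is exactly $Q$; that event has probability $\approx e^{-dK(Q\|P)}$ by Sanov, and whenever it occurs both marginal types are automatically $Q_{\cdot *},Q_{*\cdot}$, so the pair lands in the bucket. No four-way optimization arises. The boost to $S\ge 1-\epsilon$ comes from Lemma~\ref{sublinearplus}: repeat the single-bucket code on $T\approx e^{dK(Q\|P)}$ disjoint coordinate blocks to get independent success events and total work $W\approx T\approx N^{K(Q\|P)}$.

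Your scheme has a concrete difficulty you only gesture at. Requiring $(x_{0},v_{t})$ to have joint type $Q$ forces $x_{0}$ to have marginal type $Q_{\cdot *}$; but the special $x_{0}$ has type $\approx P_{\cdot *}$, so unless $Q_{\cdot *}=P_{\cdot *}$ the per-bucket capture probability is zero rather than $e^{-dK(Q\|P)}$. Your ``$Q$-shaped neighborhood that still permits $x_{0}$ to retain its natural $P_{\cdot *}$-marginal'' is precisely where the argument is missing: relaxing the constraint so that $x_{0}$ can keep type $P_{\cdot *}$ changes both the bucket-size exponent (no longer $K(Q_{\cdot *}\|P_{\cdot *})$) and the capture exponent, and your conditional-independence heuristic for the four-way optimum does not obviously recover $K(Q\|P)$ after that relaxation. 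The paper avoids all of this by never coupling $x_{0}$ to a template---bucket membership depends only on the marginal type of $x_{0}$, and the joint structure of $Q$ enters solely through the Sanov lower bound on the success event.
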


\section{An Example}

Consider the classical matrix
$P=\left( \barr{cc} p/2 & (1-p)/2 \\ (1-p)/2 & p/2 \earr \right)$.
Inserting $Q=\left( \barr{cc} 0 & 0 \\0 & 1 \earr \right)$
into theorem \ref{inverse} generates the well known
$n_{0}=n_{1}\approx N^{\ln 2}$,$S\ge 1-\epsilon$ and
$W\le N^{\epsilon + \ln 2/p}$.

The $Q\approx P$ neighborhood is important. Setting
$q_{jk}=p_{jk}+\delta_{jk}$, $\delta_{jk}\rightarrow 0$, $\delta_{**}=0$
results in $n_{0}\approx N^{\sum_{j}\frac{\delta_{j*}^{2}}{2p_{j*}}}$,
$n_{1}\approx N^{\sum_{k}\frac{\delta_{*k}^{2}}{2p_{*k}}}$, $S\ge 1-\epsilon$
and $W\le N^{\epsilon + \sum_{jk}\frac{\delta_{jk}^{2}}{2p_{jk}}}$.
Linear algebra shows that it is best to take
$\delta_{00}=-\delta_{11}=\delta$, $\delta_{10}=-\delta_{01}=\alpha\delta$.
Replacing $N$ with $N^{2/\delta^{2}}$ and $\epsilon$ with
$\epsilon\delta^{2}/2$ results in
$n_{0}\approx N^{(1-\alpha)^{2}}$, $n_{1}\approx N^{(1+\alpha)^{2}}$,
$S\ge 1-\epsilon$, $W\le N^{\epsilon+1/p+\alpha^{2}/(1-p)}$.
In particular for $\alpha=0$
$n_{0}=n_{1}=n$, $S\ge 1-\epsilon$, $W\le n^{\epsilon+1/p}$.

Is the exponent ${1/p}$ best possible?
Theorem \ref{direct} reduces the optimality of $1/p$ to a single inequality:
\begin{conjecture}
For any $1/2\le p\le 1$,
$q_{00},q_{01},q_{10},q_{11}\ge 0,q_{00}+q_{01}+q_{10}+q_{11}=1$
\berr 2p\Big[q_{00}\ln\frac{2q_{00}}{p}+q_{01}\ln\frac{2q_{01}}{1-p}+
q_{10}\ln\frac{2q_{10}}{1-p}+q_{11}\ln\frac{2q_{11}}{p}\Big]\ge\\ \ge
(q_{00}+q_{01})\ln 2(q_{00}+q_{01})+(q_{10}+q_{11})\ln 2(q_{10}+q_{11})+\\+
(q_{00}+q_{10})\ln 2(q_{00}+q_{10})+(q_{10}+q_{11})\ln 2(q_{10}+q_{11})
\eerr
\end{conjecture}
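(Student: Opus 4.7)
The plan is to recognize the conjectured inequality as the two-point Bonami--Beckner hypercontractive inequality for the binary symmetric channel, sitting exactly at the boundary of the hypercontractive regime. Modulo an evident typographical error (the last summand on the right-hand side should read $(q_{01}+q_{11})\ln 2(q_{01}+q_{11})$ rather than a second copy of $(q_{10}+q_{11})\ln 2(q_{10}+q_{11})$), the conjecture asserts that $2p\,K(Q\|P)\ge K(Q_{\cdot*}\|P_{\cdot*})+K(Q_{*\cdot}\|P_{*\cdot})$ for every $2\times 2$ probability matrix $Q$; equivalently, that $(\lambda_0,\lambda_1)=(1/(2p),1/(2p))$ is $P$ sub-conjugate in the sense of Definition~4.3.

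\textbf{Dualization.} I would first dualize via Gibbs' variational principle $K(Q\|P)=\sup_f\bigl\{\sum q_{jk}f_{jk}-\ln\sum p_{jk}e^{f_{jk}}\bigr\}$. Standard convex duality shows that the sub-conjugacy inequality (for all $Q$) is equivalent to the H\"older-type submultiplicativity
\beq
\sum_{j,k}p_{jk}\,a_j^{1/(2p)}b_k^{1/(2p)}\;\le\;\Bigl(\sum_j p_{j*}a_j\Bigr)^{1/(2p)}\Bigl(\sum_k p_{*k}b_k\Bigr)^{1/(2p)}
\eeq
for all non-negative $a_0,a_1,b_0,b_1$. Indeed, feeding the separable test function $f_{jk}=g_j/(2p)+h_k/(2p)$ into Gibbs and bounding the resulting log-partition by submultiplicativity yields sub-conjugacy; conversely, the Gibbs-optimal choices $g_j=\ln(q_{j*}/p_{j*})$, $h_k=\ln(q_{*k}/p_{*k})$ recover submultiplicativity from sub-conjugacy.

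\textbf{Identification with hypercontractivity.} Substituting $\tilde f(j)=a_j^{1/(2p)}$, $\tilde g(k)=b_k^{1/(2p)}$ and using the uniform marginals $p_{j*}=p_{*k}=1/2$, the submultiplicativity becomes
\beq
\sum_{j,k}p_{jk}\,\tilde f(j)\tilde g(k)\;\le\;\Bigl(\frac{1}{2}\sum_j\tilde f(j)^{2p}\Bigr)^{1/(2p)}\Bigl(\frac{1}{2}\sum_k\tilde g(k)^{2p}\Bigr)^{1/(2p)}.
\eeq
Since $(J,K)\sim P$ is a $\rho$-correlated binary pair with $\rho=2p-1$, this is precisely the two-function (dual) form of Nelson--Bonami--Beckner hypercontractivity. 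The hypercontractive condition $(r-1)(s-1)\ge\rho^2$ with $r=s=2p$ becomes $(2p-1)^2\ge(2p-1)^2$, i.e.\ holds with equality: the conjecture sits \emph{exactly} on the boundary of the hypercontractive regime. Normalizing $\tilde f(0)=1+\alpha,\ \tilde f(1)=1-\alpha,\ \tilde g(0)=1+\beta,\ \tilde g(1)=1-\beta$ (WLOG by positivity and $1$-homogeneity) and computing the left-hand side to equal $1+\rho\alpha\beta$, the problem reduces to the classical two-point inequality
\beq
(1+\rho\alpha\beta)^{2p}\;\le\;\frac{1}{2}\bigl((1+\alpha)^{2p}+(1-\alpha)^{2p}\bigr)\,\frac{1}{2}\bigl((1+\beta)^{2p}+(1-\beta)^{2p}\bigr)
\eeq
for $\alpha,\beta\in[-1,1]$, $2p\ge 1$, $\rho=2p-1$.

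\textbf{Main obstacle.} The two reduction steps are routine applications of convex duality; the substantive content is the boundary-case two-point inequality itself, which is famously resistant to elementary manipulation. Cauchy--Schwarz handles only the endpoints $p=1/2$ (where $\rho=0$) and $p=1$ (where $2p=2$); Jensen, H\"older, and the naive AM--GM arguments all fall short in between, as one sees by computing that at $Q=P$ the Hessian of the sub-conjugacy gap is positive semidefinite but degenerate (the direction $q_{00}=p/2+\alpha$, $q_{11}=p/2-\alpha$, $q_{01}=q_{10}=(1-p)/2$ gives a second-order vanishing). The available proofs in the literature use either Gross's semigroup/log-Sobolev argument specialized to the two-point space, or Bonami's original power-series comparison matching coefficients of both sides binomially. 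Strong corroborating evidence that this is the correct identification is that the hypercontractive constant $\rho^2=(2p-1)^2$ is exactly the exponent $n^{(2p-1)^2-\epsilon}$ appearing in the abstract as the side-set size that enables finding pairs with only $O(n)$ comparisons.
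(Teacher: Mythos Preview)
Your identification is correct, and in fact it supplies precisely the ``clever proof'' the paper asks for. The paper does \emph{not} prove this statement: it is labeled a Conjecture, and the surrounding text offers only numerical evidence, the observation that fixing the marginals forces $(1-p)^{2}q_{00}q_{11}=p^{2}q_{01}q_{10}$, the assertion that ``a brute force proof is possible,'' and the hope that ``someone will find a clever proof.'' Your Donsker--Varadhan dualization is standard and correct, and the resulting inequality $E_{P}[\tilde f(J)\tilde g(K)]\le\|\tilde f\|_{2p}\|\tilde g\|_{2p}$ for $\rho$-correlated bits with $\rho=2p-1$ is exactly the two-function Bonami--Beckner bound at the critical pair $(r-1)(s-1)=\rho^{2}$, which is a theorem. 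So where the paper gestures at a brute-force case analysis over a three-parameter family, you reduce the conjecture to a classical result in harmonic analysis on the hypercube; this both proves the statement and explains \emph{why} the exponent $1/p$ is sharp (and why $(2p-1)^{2}$ shows up as the asymmetric threshold in the abstract). The only caveat is cosmetic: you appeal to an external theorem rather than giving a self-contained argument, but either Gross's log-Sobolev proof or Bonami's binomial-coefficient comparison for the two-point space is short enough to include if one insisted.
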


Computer experimentation and critical point analysis leave no doubt that
this inequality is valid. It is four dimensional, and keeping the marginal
probabilities fixed shows that we can further restrict
\beq (1-p)^{2}q_{00}q_{11}=p^{2}q_{01}q_{10} \eeq
A brute force proof is possible. Hopefully someone will find a clever proof.

Expressing $N,\alpha$ in terms of $n_{0},n_{1}$ shows that we can do with
$ e^{\frac{\ln n_{0}+\ln n_{1}-2(2p-1)\sqrt{\ln n_{0}\ln n_{1}}}
{4p(1-p)(1-\epsilon)}} $
comparisons. In particular when $n_{0}=n_{1}^{(2p-1)^{2}-\epsilon}$,
that asymmetric approximate nearest neighbor problem is solvable in
linear time!

\section{A Proof From The Book} \label{book}

In this section we will prove theorem \ref{direct}.

\begin{theorem}
For any probability matrices $P_{1},P_{2}$ and
$\lambda_{0},\lambda_{1}\le 1\le\lambda_{0}+\lambda_{1}$
\beq I(P_{1},\lambda_{0},\lambda_{1},1)=I(P_{2},\lambda_{0},\lambda_{1},1)=0
\iff I(P_{1}\times P_{2},\lambda_{0},\lambda_{1},1)=0 \eeq
where $\times$ is tensor product.
\end{theorem}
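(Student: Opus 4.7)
My plan is to handle the two implications separately. The $\Leftarrow$ direction is an easy specialization, while the $\Rightarrow$ direction is the heart of the proof and relies on the chain rule for KL divergence together with convexity.

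For the $\Leftarrow$ direction, given any probability matrix $Q_1$ of $P_1$'s dimensions, I would substitute $Q=Q_1\times P_2$ into the tensor sub-conjugation inequality. Because $P_2$ appears identically in $Q$ and in $P_1\times P_2$, a direct expansion gives $K(Q_1\times P_2\|P_1\times P_2)=K(Q_1\|P_1)$, and the row and column marginals factor compatibly so the corresponding marginal KL quantities also reduce to just the $P_1$-piece. The tensor inequality therefore collapses to the sub-conjugation of $P_1$. Symmetrically, $Q=P_1\times Q_2$ gives the sub-conjugation of $P_2$.

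For the $\Rightarrow$ direction, let $Q$ be an arbitrary probability matrix on the tensor-product index set; write $Q^{11}$ for its marginal on $(j_1,k_1)$ and $Q^{22\mid 11}$ for the conditional distribution of $(j_2,k_2)$ given $(j_1,k_1)$. Since $P_1\times P_2$ is a product, the chain rule for KL divergence gives
\[
K(Q\|P_1\times P_2)=K(Q^{11}\|P_1)+E_{Q^{11}}\bigl[K(Q^{22\mid 11}\|P_2)\bigr].
\]
Apply the sub-conjugation of $P_1$ to the first term, and apply the sub-conjugation of $P_2$ slice-by-slice to each $Q^{22\mid 11=(j_1,k_1)}$ before taking the $Q^{11}$-expectation of the second. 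Grouping the $\lambda_0$ and $\lambda_1$ coefficients yields a lower bound $\lambda_0A_0+\lambda_1A_1$, with $A_0=K((Q^{11})_{\cdot *}\|(P_1)_{\cdot *})+E_{Q^{11}}\bigl[K(Q^{J_2\mid J_1K_1}\|(P_2)_{\cdot *})\bigr]$ and $A_1$ the column-marginal analog.

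To finish, I would show $A_0\ge K(Q_{\cdot *}\|(P_1\times P_2)_{\cdot *})$ and similarly for $A_1$. Apply the chain rule a second time, now to $K(Q_{\cdot *}\|(P_1)_{\cdot *}\times(P_2)_{\cdot *})$, decomposing it as $K(Q^{J_1}\|(P_1)_{\cdot *})+E_{Q^{J_1}}\bigl[K(Q^{J_2\mid J_1}\|(P_2)_{\cdot *})\bigr]$. The first pieces match. For the second, note that $Q^{J_2\mid J_1}$ is the $K_1$-average of $Q^{J_2\mid J_1K_1}$, so convexity of KL divergence in its first argument gives $K(Q^{J_2\mid J_1}\|(P_2)_{\cdot *})\le E_{K_1\mid J_1}\bigl[K(Q^{J_2\mid J_1K_1}\|(P_2)_{\cdot *})\bigr]$; taking the $Q^{J_1}$-expectation yields the desired bound. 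Combining completes the proof. The main obstacle is the bookkeeping of marginals and conditionals across four random variables; the underlying tools (chain rule, convexity of KL) are classical, but they must be sequenced just right so that the hypothesis on $P_1$ and $P_2$ produces exactly the pieces demanded by the tensor sub-conjugation.
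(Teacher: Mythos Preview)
Your proposal is correct and follows essentially the same route as the paper: the $\Leftarrow$ direction is declared obvious there (your product substitution is the natural way to see it), and for $\Rightarrow$ the paper likewise splits $K(Q\|P_1\times P_2)$ by the chain rule, applies the $P_1$ hypothesis to the $(j_1,k_1)$-marginal piece and the $P_2$ hypothesis slice-by-slice to the conditional piece, then collapses the extra $k_1$ (resp.\ $j_1$) index via the log-sum inequality (their Lemma~\ref{fun}), which is exactly your convexity-of-KL step phrased in summation form.
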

\begin{proof}
Direction $\Leftarrow$ is obvious, so assume the left hand side.
Denote $P=P_{1}\times P_{2}$:
\beq p_{j_{1}k_{1}j_{2}k_{2}}=p_{1,j_{1}k_{1}}p_{2,j_{2}k_{2}} \eeq
For any probability matrix
$\{q_{j_{1}k_{1}j_{2}k_{2}}\}_{j_{1}k_{1}j_{2}k_{2}}$
\beq \sum_{j_{1}k_{1}j_{2}k_{2}}q_{j_{1}k_{1}j_{2}k_{2}}\ln\frac
{q_{j_{1}k_{1}j_{2}k_{2}}}{p_{1,j_{1}k_{1}}p_{2,j_{2}k_{2}}}=
\sum_{j_{1}k_{1}}q_{j_{1}k_{1}**}\ln\frac{q_{j_{1}k_{1}**}}{p_{1,j_{1}k_{1}}}+
\sum_{j_{1}k_{1}j_{2}k_{2}}q_{j_{1}k_{1}j_{2}k_{2}}
\ln\frac{q_{j_{1}k_{1}j_{2}k_{2}}}{q_{j_{1}k_{1}**}p_{2,j_{2}k_{2}}} \eeq
Because  $I(P_{1},\lambda_{0},\lambda_{1},1)=0$
\beq \sum_{j_{1}k_{1}}q_{j_{1}k_{1}**}\ln\frac{q_{j_{1}k_{1}**}}
{p_{1,j_{1}k_{1}}} \ge
\lambda_{0}\sum_{j_{1}}q_{j_{1}***}\ln\frac{q_{j_{1}***}}{p_{1,j_{1}*}}+
\lambda_{1}\sum_{k_{1}}q_{*k_{1}**}\ln\frac{q_{*k_{1}**}}{p_{1,*k_{1}}} \eeq
Because  $I(P_{2},\lambda_{0},\lambda_{1},1)=0$
\berr \sum_{j_{2}k_{2}}q_{j_{1}k_{1}j_{2}k_{2}}/q_{j_{1}k_{1}**}\ln\frac
{q_{j_{1}k_{1}j_{2}k_{2}}/q_{j_{1}k_{1}**}}{p_{2,j_{2}k_{2}}} \ge \\ \ge
\lambda_{0}\sum_{j_{2}}q_{j_{1}k_{1}j_{2}*}/q_{j_{1}k_{1}**}\ln\frac
{q_{j_{1}k_{1}j_{2}*}/q_{j_{1}k_{1}**}}{p_{2,j_{2}*}}+
\lambda_{1}\sum_{k_{2}}q_{j_{1}k_{1}*k_{2}}/q_{j_{1}k_{1}**}\ln\frac
{q_{j_{1}k_{1}*k_{2}}/q_{j_{1}k_{1}**}}{p_{2,*k_{2}}} \eerr
\beq \sum_{j_{1}k_{1}j_{2}k_{2}}q_{j_{1}k_{1}j_{2}k_{2}}
\ln\frac{q_{j_{1}k_{1}j_{2}k_{2}}}{q_{j_{1}k_{1}**}p_{2,j_{2}k_{2}}} \ge
\lambda_{0}\sum_{j_{1}k_{1}j_{2}}q_{j_{1}k_{1}j_{2}*}\ln\frac
{q_{j_{1}k_{1}j_{2}*}}{q_{j_{1}k_{1}**}p_{2,j_{2}*}}+
\lambda_{1}\sum_{j_{1}k_{1}k_{2}}q_{j_{1}k_{1}*k_{2}}\ln\frac
{q_{j_{1}k_{1}*k_{2}}}{q_{j_{1}k_{1}**}p_{2,*k_{2}}} \eeq
so with help from lemma \ref{fun}
\beq \sum_{j_{1}k_{1}j_{2}k_{2}}q_{j_{1}k_{1}j_{2}k_{2}}
\ln\frac{q_{j_{1}k_{1}j_{2}k_{2}}}{q_{j_{1}k_{1}**}p_{2,j_{2}k_{2}}} \ge
\lambda_{0}\sum_{j_{1}j_{2}}q_{j_{1}*j_{2}*}\ln\frac
{q_{j_{1}*j_{2}*}}{q_{j_{1}***}p_{2,j_{2}*}}+
\lambda_{1}\sum_{k_{1}k_{2}}q_{*k_{1}*k_{2}}\ln\frac
{q_{*k_{1}*k_{2}}}{q_{*k_{1}**}p_{2,*k_{2}}} \eeq
Together
\beq \sum_{j_{1}k_{1}j_{2}k_{2}}q_{j_{1}k_{1}j_{2}k_{2}}\ln\frac
{q_{j_{1}k_{1}j_{2}k_{2}}}{p_{1,j_{1}k_{1}}p_{2,j_{2}k_{2}}}\ge
\lambda_{0}\sum_{j_{1}j_{2}}q_{j_{1}*j_{2}*}\ln\frac{q_{j_{1}*j_{2}*}}
{p_{1,j_{1}*}p_{2,j_{2}*}}+
\lambda_{1}\sum_{k_{1}k_{2}}q_{*k_{1}*k_{2}}\ln\frac{q_{*k_{1}*k_{2}}}
{p_{1,*k_{1}}p_{2,*k_{2}}} \eeq
hence $I(P_{1}\times P_{2},\lambda_{0},\lambda_{1},1)=0$.
\end{proof}

\begin{theorem}
For any $ B_{0}\subset \{0,1,\ldots,b_{0}-1\}^{d},\
B_{1}\subset \{0,1,\ldots,b_{1}-1\}^{d} $
\beq p_{B_{0}B_{1}}\le \min_{\lambda_{0},\lambda_{1}\le 1\le\lambda_{0}+
\lambda_{1},\ I(P,\lambda_{0},\lambda_{1},0)=0}
p_{B_{0}*}^{\lambda_{0}}p_{*B_{1}}^{\lambda_{1}} \eeq
\end{theorem}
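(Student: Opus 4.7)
The plan is to reduce the $d$-coordinate statement to a single-coordinate one by exploiting the tensor-product structure of $P^{\otimes d}$, paralleling the preceding theorem on sub-conjugate matrices. Concretely, I would proceed by induction on $d$: the $d=1$ case should be exactly (or immediately implied by) the defining condition $I(P,\lambda_0,\lambda_1,0)=0$, and the inductive step is a slice-and-recombine argument on the first coordinate.

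For the inductive step, write $B_0=\bigsqcup_{j}\{j\}\times B_0^{(j)}$ and $B_1=\bigsqcup_k\{k\}\times B_1^{(k)}$, with $B_0^{(j)}\subset\{0,\ldots,b_0-1\}^{d-1}$ and $B_1^{(k)}\subset\{0,\ldots,b_1-1\}^{d-1}$. Coordinate-wise independence of the $P^{\otimes d}$ model gives
\[ p^{(d)}_{B_0 B_1} = \sum_{j,k} p_{jk}\, p^{(d-1)}_{B_0^{(j)} B_1^{(k)}}, \quad p^{(d)}_{B_0*}=\sum_j p_{j*}a_j, \quad p^{(d)}_{*B_1}=\sum_k p_{*k}b_k, \]
where $a_j:=p^{(d-1)}_{B_0^{(j)}*}\in[0,1]$ and $b_k:=p^{(d-1)}_{*B_1^{(k)}}\in[0,1]$. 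The induction hypothesis applied to each tail pair gives $p^{(d-1)}_{B_0^{(j)}B_1^{(k)}}\le a_j^{\lambda_0}b_k^{\lambda_1}$, so
\[ p^{(d)}_{B_0 B_1} \le \sum_{j,k} p_{jk}\, a_j^{\lambda_0} b_k^{\lambda_1}. \]
Closing the induction then amounts to the weighted one-coordinate inequality
\[ \sum_{j,k} p_{jk}\, a_j^{\lambda_0} b_k^{\lambda_1} \le \Big(\sum_j p_{j*} a_j\Big)^{\lambda_0}\Big(\sum_k p_{*k} b_k\Big)^{\lambda_1} \quad\text{for all } a_j,b_k\in[0,1]. \]

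The main obstacle is establishing this weighted one-coordinate inequality from $I(P,\lambda_0,\lambda_1,0)=0$. The cleanest route is if the condition $I(P,\lambda_0,\lambda_1,0)=0$ is itself the weighted form above (its $\{0,1\}$-valued specialization recovering the $d=1$ statement of the theorem), in which case the inductive step is automatic. If instead the condition is defined only on indicator subsets, I would upgrade to arbitrary weights $a_j,b_k\in[0,1]$ via a layer-cake decomposition $a_j=\int_0^1\mathbf{1}\{a_j>t\}\,dt$ together with the concavity of $t\mapsto t^{\lambda_i}$, for which $\lambda_i\le 1$ is essential; the complementary hypothesis $\lambda_0+\lambda_1\ge 1$ is inherited from the statement and keeps the bound aligned with its role as a building block for Theorem \ref{direct}. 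Iterating the one-coordinate step $d$ times --- equivalently, proving that $I(P,\lambda_0,\lambda_1,0)=0$ is preserved under tensor products in the manner of the preceding theorem --- then delivers the full $d$-dimensional conclusion.
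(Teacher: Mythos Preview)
The paper's proof is short and structured differently from your slicing induction. It invokes the preceding theorem (sub-conjugacy is preserved under tensor products) to reduce to $d=1$ by regarding $P^{\otimes d}$ as a single letter; then, for $d=1$, it sets $q_{jk}=p_{jk}/p_{B_0B_1}$ on $B_0\times B_1$ and $0$ elsewhere, plugs this $Q$ into the defining KL inequality~(\ref{sub}), and uses Lemma~\ref{fun} to get $K(Q_{\cdot*}\|P_{\cdot*})\ge-\ln p_{B_0*}$ and likewise for the other marginal, while $K(Q\|P)=-\ln p_{B_0B_1}$. The bound drops out in one line. Note that the defining condition is the KL inequality~(\ref{sub}) for all $Q$, not your weighted product form, so even your $d=1$ base case is not ``exactly the definition'': it already needs this plug-in step.

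Your inductive step has a real gap: the layer-cake route to the weighted inequality fails. After applying the indicator bound on each level set and separating the two integrals, you would need
\[
\int_0^1 \lambda_0\, u^{\lambda_0-1}\,h(u)^{\lambda_0}\,du \;\le\; \Big(\int_0^1 h(u)\,du\Big)^{\lambda_0}
\]
for the non-increasing function $h(u)=p_{B_0(u)*}$. With $h=\mathbf{1}_{[0,1/2)}+\frac{1}{2}\mathbf{1}_{[1/2,1)}$ and $\lambda_0=\frac{1}{2}$ the left side equals $\sqrt{2}-\frac{1}{2}\approx 0.914$ and the right side $\sqrt{3}/2\approx 0.866$, so the inequality is false; concavity of $t\mapsto t^{\lambda_0}$ goes the wrong way here. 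Your weighted inequality is in fact true --- it is the two-function Brascamp--Lieb dual of the KL sub-conjugacy condition --- but establishing that duality is no easier than the paper's tensor-product route, which you allude to at the very end without developing.
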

\begin{proof}
Without restricting generality let $d=1$. Inserting
\beq q_{jk}=\left\{ \barr{ll}\frac{p_{jk}}{p_{B_{0}B_{1}}} &
j\in B_{0},k\in B_{1}\\ 0 & {\rm otherwise} \earr \right.  \eeq
into (\ref{sub}) proves the assertion.
\end{proof}

Proof of {\bf theorem \ref{direct}}.
\begin{proof}
Recall that the work is $W=\sum_{i}W_{i}$ where
\beq W_{i}=\max\left(n_{0}p_{B_{0,i}*},\ n_{1}p_{*B_{1,i}},
\ n_{0}p_{B_{0,i}*}n_{1}p_{*B_{0,i}}\right) \eeq
Our parameters satisfy
\beq (\lambda_{0},\lambda_{1})\in{\rm Conv}(\{(1,0),(0,1),(1,1)\}) \eeq
hence
\beq \ln W_{i}\ge \lambda_{0}\ln(n_{0}p_{B_{0,i}*})+
\lambda_{1}\ln(n_{1}p_{*B_{1,i}}) \eeq
\beq W_{i}\ge n_{0}^{\lambda_{0}}n_{1}^{\lambda_{1}}p_{B_{0,i}B_{1,i}} \eeq
Now sum up.
\end{proof}

\section{Bucketing Information}

All the results of this section will be proven in appendix \ref{infoproofs}.
\begin{definition} \label{buck}
Suppose $P$ is a probability matrix. The $\mathbf{bucketing}$
$\mathbf{information}$ function is for $\mu\ge 0$
\berrr I(P,\lambda_{0},\lambda_{1},\mu)=\max_{\footnotesize\barr{ll}
\{r_{i,jk}\ge 0\}_
{\barr{lll}0\le i<b_{0}b_{1}\\0\le j<b_{0} \\0\le k<b_{1}\earr}
\\ r_{*,**}=1\earr} \Bigg[
\lambda_{0}\sum_{i=0}^{b_{0}b_{1}-1}K(R_{i,\cdot *}\|P_{\cdot *})+ 
\lambda_{1}\sum_{i=0}^{b_{0}b_{1}-1}K(R_{i,*\cdot}\|P_{*\cdot})+\\+
(1-\mu)K(R_{*,\cdot\cdot}\|P_{\cdot\cdot})-
\sum_{i=0}^{b_{0}b_{1}-1}K(R_{i,\cdot\cdot}\|P_{\cdot\cdot}) \Bigg]\eerrr
Explicitly$ r_{i,j*}=\sum_{k=0}^{b_{1}-1}r_{i,jk} $,
$ K(R_{i,\cdot *}\|P_{\cdot *})=\sum_{j=0}^{b_{0}-1}r_{i,j*}
\ln\frac{r_{i,j*}}{r_{i,**}p_{j*}} $ etc.
\end{definition}

\begin{lemma} \label{equivalent}
For any probability matrix $P$ and $0\le\mu\le 1$ the sums in definition
\ref{buck} can be restricted to a single term, i.e.
\beq I(P,\lambda_{0},\lambda_{1},\mu)=\max_{Q}\Bigg[
\lambda_{0}K(Q_{\cdot *}\|P_{\cdot *})+ 
\lambda_{1}K(Q_{*\cdot}\|P_{*\cdot})-
\mu K(Q_{\cdot\cdot}\|P_{\cdot\cdot}) \Bigg] \label{single}\eeq
where $Q$ ranges over all probability matrices.
For any $\mu\ge 0$, not restricting the number of terms $i$ in definition
\ref{buck} does not change $I$. It can be rewritten as
\beq I(P,\lambda_{0},\lambda_{1},\mu)=\max_{Q}\bigg[(1-\mu)
K(Q_{\cdot\cdot}\|P_{\cdot\cdot})+\max_{(Q,y)\in {\rm Conv}\left(
G(P,\lambda_{0},\lambda_{1})\right) }y \bigg]  \label{def2}\eeq
where ${\rm Conv}$ is the convex hull and
\beq G(P,\lambda_{0},\lambda_{1})=\{(Q,\ 
\lambda_{0}K(Q_{\cdot *}\|P_{\cdot *})+\lambda_{1}K(Q_{*\cdot}\|P_{*\cdot})-
K(Q_{\cdot\cdot}\|P_{\cdot\cdot}))\}_{Q} \eeq
\end{lemma}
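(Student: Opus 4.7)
The plan is to reparametrize the $\{r_{i,jk}\}$ so that each index $i$ separates cleanly. Set $\alpha_i := r_{i,**}$ and, for $\alpha_i>0$, let $Q_i$ be the probability matrix with entries $Q_{i,jk} := r_{i,jk}/\alpha_i$ (terms with $\alpha_i = 0$ may be dropped). Then $\sum_i \alpha_i = r_{*,**} = 1$ and the aggregate matrix $Q := R_{*,\cdot\cdot}$ equals $\sum_i \alpha_i Q_i$. From the definition of the extended KL divergence one directly checks $K(R_{i,\cdot\cdot}\|P) = \alpha_i K(Q_i\|P)$, the analogous identity for both marginals, and $K(R_{*,\cdot\cdot}\|P) = K(Q\|P)$ (since $R_{*,**} = 1$). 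Writing $f(Q') := \lambda_0 K(Q'_{\cdot *}\|P_{\cdot *}) + \lambda_1 K(Q'_{*\cdot}\|P_{*\cdot}) - K(Q'\|P)$, the Definition \ref{buck} objective collapses to $\Phi = \sum_i \alpha_i f(Q_i) + (1-\mu) K(Q\|P)$.

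For the single-term formula (\ref{single}) under $0 \le \mu \le 1$, I would invoke joint convexity of $K(\cdot\|\cdot)$ in its first argument, which gives $K(Q\|P) \le \sum_i \alpha_i K(Q_i\|P)$. Because $1-\mu \ge 0$, this allows absorption of the $(1-\mu)K(Q\|P)$ term into the $i$-sum, yielding
\[ \Phi \le \sum_i \alpha_i \bigl[\lambda_0 K(Q_{i,\cdot *}\|P_{\cdot *}) + \lambda_1 K(Q_{i,*\cdot}\|P_{*\cdot}) - \mu K(Q_i\|P)\bigr], \]
which is at most the pointwise maximum over the $Q_i$. The reverse inequality is immediate: a single term with $\alpha_1 = 1$ and $Q_1$ the optimizer on the right-hand side of (\ref{single}) attains equality.

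The convex-hull rewrite (\ref{def2}) then follows by observing that $(Q_i, f(Q_i)) \in G(P,\lambda_0,\lambda_1)$ for each $i$, so $\bigl(Q, \sum_i \alpha_i f(Q_i)\bigr) \in {\rm Conv}(G)$; conversely every element of ${\rm Conv}(G)$ arises as such a convex combination. Hence, for fixed $Q$, the supremum of $\sum_i \alpha_i f(Q_i)$ over all decompositions equals $\max\{y : (Q,y) \in {\rm Conv}(G)\}$, and taking the outer max over $Q$ produces (\ref{def2}). To complete the lemma, I would argue that allowing unboundedly many $i$ cannot increase $I$ for any $\mu \ge 0$, by a Carathéodory reduction on the simplex of probability matrices (dimension $b_0 b_1 - 1$). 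Given $n > b_0 b_1$ terms, the $Q_i$ are affinely dependent, yielding scalars $\{c_i\}$ not all zero with $\sum c_i Q_i = 0$ and $\sum c_i = 0$; the shift $\alpha_i \mapsto \alpha_i + t c_i$ preserves $Q$ and $\sum \alpha_i = 1$, and the sign of $t$ can be chosen so that $\sum_i \alpha_i f(Q_i)$ does not decrease while some $\alpha_i$ is driven to zero. Iterating reduces to at most $b_0 b_1$ terms.

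The main obstacle is making this last step deliver exactly $b_0 b_1$ indices rather than the generic $b_0 b_1 + 1$. The saving stems from the fact that the normalization $\sum_i \alpha_i = 1$ is automatically implied by $\sum_i \alpha_i Q_i = Q$ (summing both sides over the $(j,k)$ entries), so the effective number of independent linear constraints on $\alpha$ is $b_0 b_1 - 1$ rather than $b_0 b_1$, and affine dependence of the $Q_i$'s kicks in at $n = b_0 b_1 + 1$.
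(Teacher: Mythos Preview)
Your argument is correct and follows essentially the same route as the paper: the same reparametrization $r_{i,jk}=\alpha_i Q_{i,jk}$, the same use of convexity of $K(\cdot\|P)$ (the paper cites Lemma~\ref{fun}) to collapse to a single term when $0\le\mu\le1$, and the same Carath\'eodory-type reduction to $b_0b_1$ terms. The only cosmetic difference is that the paper invokes Carath\'eodory for boundary points of $\mathrm{Conv}(G)\subset\mathbb{R}^{b_0b_1}$ in one line, whereas you spell out the affine-dependence step on the $(b_0b_1-1)$-dimensional simplex; both yield the bound $b_0b_1$ for the same reason.
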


From now on when dealing with the bucketing information function, we will
denote $\sum_{i}$ without worrying about the number of indices.

\begin{lemma} \label{special}
For any probability matrix $P$ and $\mu\ge 0$ the bucketing information
function $I(P,\lambda_{0},\lambda_{1},\mu)$
is nonnegative, convex, monotonically nondecreasing in
$\lambda_{0},\lambda_{1}$ and monotonically non-increasing in $\mu$.
Special values are
\beq I(P,\lambda_{0},\lambda_{1},\mu)=\mu I(P,\lambda_{0}/\mu,\lambda_{1}/\mu,
1)\quad 0<\mu \le 1 \label{obv}\eeq
\beq I(P,\lambda_{0},\lambda_{1},\mu)=0 \iff
\forall Q ,\ \min(\mu,1)K(Q_{\cdot\cdot}\|P_{\cdot\cdot}))\ge
\lambda_{0}K(Q_{\cdot *}\|P_{\cdot *})+
\lambda_{1}K(Q_{*\cdot}\|P_{*\cdot})  \label{izero}\eeq
\beq I(P,\lambda_{0},\lambda_{1},\mu)=0\qquad 0\le\lambda_{0},\lambda_{1}
\ \ \lambda_{0}+\lambda_{1}\le\min(\mu,1) \label{izero2}\eeq
\beq I(P,1,1,\mu)= \max_{0\le j<b_{0},0\le k<b_{1}}\ln\frac{(p_{jk})^{\mu}}
{p_{j*}p_{*k}} \qquad 0\le\mu\le 1 \eeq
\beq I(P,1,1,\mu)=(\mu-1)\ln\sum_{j=0}^{b_{0}-1}\sum_{k=0}^{b_{1}-1}
p_{jk}\left(\frac{p_{jk}}{p_{j*}p_{*k}}\right)^{\frac{1}{\mu-1}}
\quad \mu\ge 1 \eeq
\beq I(P,1,1,\infty)=I(P)=\sum_{j=0}^{b_{0}-1}\sum_{k=0}^{b_{1}-1}
p_{jk}\ln\frac{p_{jk}}{p_{j*}p_{*k}} \eeq
\end{lemma}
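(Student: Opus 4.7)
The plan is to derive every clause from Lemma \ref{equivalent}, using the single-term form (\ref{single}) whenever $0\le\mu\le 1$ and the multi-term / convex-hull form (\ref{def2}) when $\mu\ge 1$. The common thread is that the bracket inside the max in Definition \ref{buck} is affine in $(\lambda_0,\lambda_1,\mu)$ for each fixed configuration $R$, which immediately yields the structural properties: nonnegativity follows by taking $R_0=P$ (all KL terms vanish); convexity in $(\lambda_0,\lambda_1,\mu)$ is the convexity of a pointwise supremum of affine functions; monotonicity in $\lambda_0,\lambda_1$ holds because each multiplies a nonnegative sum of KL terms, and monotonicity in $\mu$ holds because $(1-\mu)$ multiplies the nonnegative $K(R_{*,\cdot\cdot}\|P_{\cdot\cdot})$. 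The scaling identity (\ref{obv}) is obtained by factoring $\mu$ out of the bracket in (\ref{single}) and rescaling $(\lambda_0,\lambda_1)$.

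For the zero characterization (\ref{izero}), the case $0\le\mu\le 1$ is read off directly from (\ref{single}): the max is nonpositive iff the stated inequality holds for every $Q$, and combined with $I\ge 0$ this is exactly $I=0$. For $\mu\ge 1$ the direction ``$\Leftarrow$'' is a direct calculation: applying the stated inequality to each normalized $Q_i=R_i/r_{i,**}$ and scaling by $r_{i,**}$ gives $\sum_i K(R_i\|P)\ge \lambda_0\sum_i K(R_{i,\cdot *}\|P_{\cdot *})+\lambda_1\sum_i K(R_{i,*\cdot}\|P_{*\cdot})$, while the residual $(1-\mu)K(R_{*,\cdot\cdot}\|P_{\cdot\cdot})$ is nonpositive since $\mu\ge 1$. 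The direction ``$\Rightarrow$'' is the subtle step, handled by a Taylor perturbation: if the stated inequality failed at some $Q^*$, I would set $R_0=\alpha Q^*$, $R_1=(1-\alpha)P$; expanding $K(\alpha Q^*+(1-\alpha)P\|P)=O(\alpha^2)$ around $P$ shows that the bracket equals $\alpha[\lambda_0 K(Q^*_{\cdot *}\|P_{\cdot *})+\lambda_1 K(Q^*_{*\cdot}\|P_{*\cdot})-K(Q^*\|P)]+O(\alpha^2)$, which is strictly positive for small enough $\alpha>0$, contradicting $I=0$. Claim (\ref{izero2}) then follows from Lemma \ref{fun}: $K(Q\|P)\ge K(Q_{\cdot *}\|P_{\cdot *})$ and $K(Q\|P)\ge K(Q_{*\cdot}\|P_{*\cdot})$ together yield $\lambda_0 K(Q_{\cdot *}\|P_{\cdot *})+\lambda_1 K(Q_{*\cdot}\|P_{*\cdot})\le(\lambda_0+\lambda_1)K(Q\|P)\le\min(\mu,1)K(Q\|P)$.

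For the closed forms of $I(P,1,1,\mu)$, the case $0\le\mu\le 1$ is a direct rewrite: in (\ref{single}) the bracket equals $\sum_{jk}q_{jk}\ln\frac{p_{jk}^\mu}{p_{j*}p_{*k}}-[H(Q_{\cdot *})+H(Q_{*\cdot})-\mu H(Q)]$, where the subtracted term is $\ge(1-\mu)H(Q)\ge 0$ by subadditivity of entropy, so the value is $\le\max_{jk}\ln\frac{p_{jk}^\mu}{p_{j*}p_{*k}}$, attained by $Q$ concentrated on the optimal cell $(j^*,k^*)$. For $\mu\ge 1$ I use the multi-term form and restrict the $R_i$'s to atoms $R_{(j,k)}=\sigma_{jk}e_{jk}$: a direct calculation shows that replacing a general $R_i$ by its atomic decomposition changes the objective by $r_{i,**}$ times the within-term mutual information (nonnegative), so the restriction is without loss. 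The problem reduces to $\max_\sigma[\sum_{jk}\sigma_{jk}\ln\frac{p_{jk}}{p_{j*}p_{*k}}-(\mu-1)K(\sigma\|P)]$ subject to $\sum\sigma_{jk}=1$, a strictly concave program whose Lagrange optimizer is the Gibbs form $\sigma_{jk}\propto p_{jk}(p_{jk}/(p_{j*}p_{*k}))^{1/(\mu-1)}$; substitution produces the stated Rényi-type expression $(\mu-1)\ln Z$. The $\mu\to\infty$ limit follows by expanding $(p_{jk}/(p_{j*}p_{*k}))^{1/(\mu-1)}=1+\frac{1}{\mu-1}\ln\frac{p_{jk}}{p_{j*}p_{*k}}+O((\mu-1)^{-2})$, giving $(\mu-1)\ln Z\to I(P)$. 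The principal obstacle, in my view, is justifying the atomic-splitting reduction for $\mu\ge 1$: one has to track simultaneously how splitting affects the marginal sums $\sum_i K(R_{i,\cdot *}\|P_{\cdot *})$, $\sum_i K(R_{i,*\cdot}\|P_{*\cdot})$ and the joint sum $\sum_i K(R_i\|P)$, and recognize their net change as a mutual information (which is $\ge 0$ precisely because $\lambda_0=\lambda_1=1$).
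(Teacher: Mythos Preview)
Your proposal is correct and follows essentially the same route as the paper's proof, including the first-order perturbation for the $\Rightarrow$ direction of (\ref{izero}) when $\mu\ge 1$ (the paper uses $R_0=\epsilon Q$, $R_1=P-\epsilon Q$ rather than your $R_1=(1-\alpha)P$, but the mechanism is identical) and the atomic-splitting reduction for $I(P,1,1,\mu)$. For the step you flag as the principal obstacle, the paper dispatches it in one stroke via the identity
\[
\sum_i\big[K(R_{i,\cdot *}\|P_{\cdot *})+K(R_{i,*\cdot}\|P_{*\cdot})-K(R_{i,\cdot\cdot}\|P_{\cdot\cdot})\big]
=\sum_{jk}r_{*,jk}\ln\frac{p_{jk}}{p_{j*}p_{*k}}-\sum_{i,jk}r_{i,jk}\ln\frac{r_{i,**}\,r_{i,jk}}{r_{i,j*}\,r_{i,*k}},
\]
which isolates the nonnegative within-term mutual information and shows immediately that atomic $R_i$'s (one nonzero cell each) are optimal, after which the problem reduces to maximizing $\sum_{jk}r_{*,jk}\ln\frac{p_{jk}^{\mu}}{p_{j*}p_{*k}}+(1-\mu)\sum_{jk}r_{*,jk}\ln r_{*,jk}$ uniformly for all $\mu\ge 0$.
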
 

\begin{theorem} \label{indy}
For any probability matrices $P_{1},P_{2}$ and $\mu\ge 0$
\beq I(P_{1}\times P_{2},\lambda_{0},\lambda_{1},\mu)=
I(P_{1},\lambda_{0},\lambda_{1},\mu)+I(P_{2},\lambda_{0},\lambda_{1},\mu) \eeq
\end{theorem}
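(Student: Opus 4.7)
I will prove both directions separately. For the superadditive direction $I(P_1 \times P_2, \lambda_0, \lambda_1, \mu) \ge I(P_1, \lambda_0, \lambda_1, \mu) + I(P_2, \lambda_0, \lambda_1, \mu)$, the plan is to take optimal mixtures $\{r^{(1)}_i\}$ and $\{r^{(2)}_j\}$ witnessing $I(P_1, \ldots)$ and $I(P_2, \ldots)$ in the sense of definition \ref{buck}, and to form the tensor product $r_{(i,j),\, v_1 v_2} := r^{(1)}_i(v_1)\, r^{(2)}_j(v_2)$, where $v_1 = (j_1, k_1)$ and $v_2 = (j_2, k_2)$. A direct calculation using the additivity of KL divergence under product measures --- together with $\sum_i R^{(1)}_i(**) = \sum_j R^{(2)}_j(**) = 1$ to eliminate cross terms --- shows that the definition \ref{buck} objective evaluated on $\{r_{(i,j)}\}$ for $P_1 \times P_2$ equals its value on $\{r^{(1)}_i\}$ for $P_1$ plus its value on $\{r^{(2)}_j\}$ for $P_2$, yielding $\ge$.

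For the subadditive direction my plan is a KL chain-rule decomposition along the $P_1 / P_2$ split. Given any feasible mixture $\{r_i\}$ for $I(P_1 \times P_2)$, I will associate (i) a mixture $\{R_{i,1}\}$ on $v_1$, defined by $R_{i,1}(v_1) := r_{i, v_1, **}$, and (ii) a finer reindexed mixture $\{r'_{(i, v_1)}\}$ on $v_2$, defined by $r'_{(i, v_1),\, v_2} := r_{i, v_1, v_2}$. The identity $K(R_i \| P_1 \times P_2) = K(R_{i,1} \| P_1) + \sum_{v_1} R_{i,1}(v_1)\, K(R_i(\cdot | v_1) \| P_2)$ together with its row and column marginal analogues rewrites the definition \ref{buck} objective on $\{r_i\}$ as a ``$P_1$ contribution'' plus a ``$P_2$ contribution''. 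Lemma \ref{fun} is the workhorse throughout: it bounds the $j_1$-indexed chain-rule terms such as $\sum_{j_1} R_{i,1,r}(j_1)\, K(R_{i, r_1 r_2}(\cdot|j_1) \| p_{2, \cdot *})$ from above by the finer $(i, v_1)$-indexed sum $\sum_{(i, v_1)} K(R'_{(i, v_1), \cdot *} \| p_{2, \cdot *})$ appearing in the definition \ref{buck} objective of $P_2$ at $\{r'_{i'}\}$.

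After this rewriting, the only surviving discrepancy between the ``$P_2$ contribution'' and the definition \ref{buck} objective for $P_2$ is a residual $(1-\mu)\bigl[\sum_{v_1} R_{*,1}(v_1)\, K(R_*(\cdot | v_1) \| P_2) - K(R_{*,2} \| P_2)\bigr]$, whose bracket is $\ge 0$ by convexity of KL in its first argument. For $\mu \ge 1$ the residual is $\le 0$, so the ``$P_2$ contribution'' is at most the definition \ref{buck} objective of $\{r'_{i'}\}$ for $P_2$, which in turn is $\le I(P_2)$, and subadditivity follows. For $0 \le \mu \le 1$ the residual has the wrong sign, so I will instead use the equivalent single-term formulation (\ref{single}) of lemma \ref{equivalent}: any probability matrix $Q$ decomposes as $F_{P_1 \times P_2}(Q) = F_{P_1}(Q^{(1)}) + A$ where $F_P(Q) := \lambda_0 K(Q_{\cdot *} \| P_{\cdot *}) + \lambda_1 K(Q_{*\cdot} \| P_{*\cdot}) - \mu K(Q \| P)$, and the bound $A \le \sum_{v_1} Q^{(1)}(v_1)\, F_{P_2}(Q(\cdot | v_1)) \le I(P_2)$ again follows from lemma \ref{fun} applied in exactly the same way.

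The main obstacle will be tracking the sign of this residual term carefully, so that the correct formulation --- single-term for $\mu \le 1$ or multi-term for $\mu \ge 1$ --- is selected in each case; once the formulation is chosen appropriately, every remaining step is a mechanical application of the KL chain rule, lemma \ref{fun}, and convexity of KL.
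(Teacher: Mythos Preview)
Your proposal is correct and follows essentially the same strategy as the paper: the superadditive direction via tensoring optimal mixtures, and the subadditive direction via the KL chain rule along the $P_{1}/P_{2}$ split together with lemma \ref{fun} for the marginal terms, with the $0\le\mu\le 1$ case handled through the single-term formulation (\ref{single}). The one organizational difference is in how you invoke the $I(P_{2})$ bound. The paper applies it \emph{separately} for each fixed $(j_{1},k_{1})$, keeping the original mixture index $i$; this makes the $(1-\mu)K(R'_{*}\|P_{2})$ term in each application coincide exactly with the chain-rule conditional $K\big(r_{*,j_{1}k_{1}\cdot\cdot}/r_{*,j_{1}k_{1}**}\,\big\|\,P_{2}\big)$, so no residual ever appears. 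You instead reindex globally by $(i,v_{1})$ and apply $I(P_{2})$ once, which collapses $R'_{*}$ to the full $v_{2}$-marginal and produces your residual $(1-\mu)\big[\sum_{v_{1}}R_{*,1}(v_{1})K(R_{*}(\cdot|v_{1})\|P_{2})-K(R_{*,2}\|P_{2})\big]$, forcing the case split on $\mu$. Both routes are valid; the paper's per-conditional application buys a slightly cleaner bookkeeping of the $(1-\mu)$ term, while yours makes the role of the convexity gap more explicit. The paper's closing remark ``for $0\le\mu\le 1$ there is only one $i$'' is exactly your use of (\ref{single}) in that regime.
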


\section{Bucketing Codes and Information}

All the results of this section will be proven in appendix \ref{codesproofs}.
\begin{theorem} \label{big}
For any bucketing code with probability matrix
$P_{1}\times P_{2}\times\cdots\times P_{\tilde{d}}$, dimension $d=1$,
set sizes $n_{0},n_{1}$, success probability $S$ and work $W$
\beq \ln W\ge \sup_{\lambda_{0},\lambda_{1}\le 1\le
\lambda_{0}+\lambda_{1},\ \mu\ge 0}
\left[\lambda_{0}\ln n_{0}+\lambda_{1}\ln n_{1}+\mu\ln S-\sum_{i=1}^{\tilde{d}}
I(P_{i},\lambda_{0},\lambda_{1},\mu) \right] \eeq
\end{theorem}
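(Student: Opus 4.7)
The plan is to reduce the theorem to a single-matrix statement via tensorization, extract a per-bucket lower bound on work of the form used in theorem \ref{direct}, and then close with an entropic inequality deduced from the variational definition of $I$.

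First, theorem \ref{indy} gives $I(P_1 \times \cdots \times P_{\tilde d}, \lambda_0, \lambda_1, \mu) = \sum_i I(P_i, \lambda_0, \lambda_1, \mu)$, so it suffices to establish the bound for a single probability matrix $P$ with $d=1$. Next, following the per-bucket argument in the proof of theorem \ref{direct}, for any $(\lambda_0, \lambda_1)$ with $\lambda_0, \lambda_1 \leq 1 \leq \lambda_0+\lambda_1$ the convex hull bound gives $W_t = \max(n_0 p_{B_{0,t}*}, n_1 p_{*B_{1,t}}, n_0 n_1 p_{B_{0,t}*} p_{*B_{1,t}}) \geq n_0^{\lambda_0} n_1^{\lambda_1} p_{B_{0,t}*}^{\lambda_0} p_{*B_{1,t}}^{\lambda_1}$, so summing yields $W \geq n_0^{\lambda_0} n_1^{\lambda_1} \sum_t A_t$ with $A_t := p_{B_{0,t}*}^{\lambda_0} p_{*B_{1,t}}^{\lambda_1}$. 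The theorem therefore reduces to the purely entropic inequality $I(P, \lambda_0, \lambda_1, \mu) \geq \mu \ln S - \ln \sum_t A_t$.

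To establish that inequality, I would plug into the variational definition (definition \ref{buck}) the test
\[
r_{t, jk} = \frac{p_{jk} \, \mathbb{1}[(j,k) \in B_{0,t} \times B_{1,t}]}{\sum_{t'} s_{t'}}, \qquad s_t := p_{B_{0,t} B_{1,t}},
\]
which has $r_{*,**} = 1$. Writing $\tilde Z = \sum_{t'} s_{t'}$ and $Q_{t,jk} = p_{jk}/s_t$ for the bucket conditional, one computes $K(R_{t,\cdot\cdot}\|P_{\cdot\cdot}) = -(s_t/\tilde Z)\ln s_t$, while lemma \ref{fun} yields the marginal bounds $K(R_{t,\cdot *}\|P_{\cdot *}) \geq -(s_t/\tilde Z)\ln p_{B_{0,t}*}$ and $K(R_{t,*\cdot}\|P_{*\cdot}) \geq -(s_t/\tilde Z)\ln p_{*B_{1,t}}$, and a further application of lemma \ref{fun} gives $K(R_{*,\cdot\cdot}\|P_{\cdot\cdot}) \geq -\ln S$. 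Substituting these into the definition of $I$ and invoking the log-sum inequality on $\sum_t s_t \ln(s_t/A_t)$ produces
\[
I \;\geq\; \ln\bigl(\tilde Z / \textstyle\sum_t A_t\bigr) + (1-\mu)\, K(R_{*,\cdot\cdot}\|P_{\cdot\cdot})
\]
and for $0 \leq \mu \leq 1$ the sign of $(1-\mu)$ is nonnegative; combined with $\tilde Z \geq S$ (the union bound on coverage) this rearranges to $I \geq \mu \ln S - \ln \sum_t A_t$ as required.

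The main obstacle is the range $\mu > 1$: the coefficient $(1-\mu)$ flips sign and the lower bound on $K(R_{*,\cdot\cdot})$ has the wrong orientation. I would handle it by switching to the equivalent convex-hull reformulation (\ref{def2}) of $I$ and mixing the bucket conditionals $Q_t$ with weights $\pi_t = s_t/\sum_{t'} s_{t'}$. In the disjoint sub-case one checks that $K(\sum_t \pi_t Q_t \| P)$ equals $-\ln S$ exactly, so substitution into $I \geq (1-\mu) K(\cdot\|P) + \sum_t \pi_t y_t$ (with $y_t = \lambda_0 K(Q_{t,\cdot *}\|P_{\cdot *}) + \lambda_1 K(Q_{t,*\cdot}\|P_{*\cdot}) - K(Q_t\|P)$) closes the argument uniformly in $\mu \geq 0$. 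The non-disjoint $\mu > 1$ case is the delicate part: it appears to require either a Rényi-type reweighting $w_t \propto s_t^{\mu-1}$ in the multi-term definition to engineer cancellation between the multi-coverage term of $K(R_{*,\cdot\cdot})$ and the bucket sum, or a reduction to disjoint buckets exploiting that the quantities $W$, $S$, and $\sum_t A_t$ behave monotonically under rectangular splitting. I expect the hardest bookkeeping to be in this last step.
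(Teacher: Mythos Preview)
Your strategy --- reduce via theorem \ref{indy}, insert a test family into definition \ref{buck}, bound the marginal divergences by lemma \ref{fun}, and close with log-sum --- is exactly the paper's. You also correctly isolate the obstruction: with overlapping rectangles $K(R_{*,\cdot\cdot}\|P)$ is only bounded below by $-\ln S$, and for $\mu>1$ that inequality points the wrong way.

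The gap is that both of your proposed fixes are unnecessary, and the rectangular-splitting route does not work as stated: since $\lambda_0,\lambda_1\le 1$, concavity of $x\mapsto x^{\lambda}$ makes $\sum_t A_t$ \emph{increase} under refinement, so the disjoint bound you would then invoke is weaker, not stronger, and cannot be transferred back to the original $W$. The paper's move is much simpler. Disjointify in the \emph{product} space from the outset,
\[
B_i \;=\; (B_{0,i}\times B_{1,i})\ \setminus\ \bigcup_{t<i}(B_{0,t}\times B_{1,t}),
\qquad
r_{i,jk}=\left\{\begin{array}{ll} p_{jk}/S & (j,k)\in B_i \\ 0 & \mathrm{otherwise.}\end{array}\right.
\]
These $B_i$ are no longer rectangles, but that is irrelevant: since $B_i\subseteq B_{0,i}\times B_{1,i}$, the row-support of $R_{i}$ lies inside $B_{0,i}$, so lemma \ref{fun} still yields $K(R_{i,\cdot *}\|P_{\cdot *})\ge -r_{i,**}\ln p_{B_{0,i}*}$ (and symmetrically for columns). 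Because the $B_i$ are disjoint and cover the success event exactly, one now has $K(R_{*,\cdot\cdot}\|P)=-\ln S$ as an \emph{equality}, the sign of $(1-\mu)$ is harmless, and a single computation handles all $\mu\ge 0$ at once. The work $W$ is untouched since it is defined through the original marginals $p_{B_{0,i}*},\,p_{*B_{1,i}}$. No R\'enyi reweighting, no rectangular decomposition, no case split on $\mu$.
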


\begin{definition}
Assume the i.i.d. data model with probability matrix $P$.
Suppose there exists a $d$ dimensional bucketing code
such that for the expected numbers $n_{0},n_{1}$
of $X_{0},X_{1}$ points it has success probability $S$ and work $W$.
Then for any real numbers $0\le\tilde{S}\le S,\ \tilde{W}\ge W$ we say that
$(P,d,n_{0},n_{1},\tilde{S},\tilde{W})$
is $\mathbf{attainable}$.
Define the set of $\mathbf{log-attainable}$ parameters to be
\beq D(P)=\bigg\{\left.\frac{1}{d}(\ln n_{0},\ln n_{1},-\ln S,\ln W)\ 
\right| (P,d,n_{0},n_{1},S,W) {\rm\ is\ attainable} \bigg\} \eeq
Normalizing by $d$ is awkward in the infinite data case $d=\infty$.
There it makes sense to consider the $\mathbf{log-attainable}$ $\mathbf{cone}$
\beq D_{0}(P)={\rm Cone}(D(P))=\cup_{\alpha\ge 0}\alpha D(P) \eeq
\end{definition}

Theorem \ref{big} is asymptotically tight in the following sense:
\begin{theorem} \label{asy}
For any probability matrix $P$ the closure of its log-attainable set is
\berr D^{c}(P)\ =\ \{(m_{0},m_{1},s,w)\ |\ s\ge 0, \label{clos}\\ \forall\ 
\lambda_{0},\lambda_{1}\le 1\le\lambda_{0}+\lambda_{1},\ \mu\ge 0\quad
w\ge\lambda_{0}m_{0}+\lambda_{1}m_{1}-
\mu s-I(P,\lambda_{0},\lambda_{1},\mu) \} \eerr
Equivalently
\berr D^{c}(P)\ =\ D(0)+{\rm Conv}\bigg(\Big\{\Big(
\sum_{i}K(R_{i,\cdot *}\|P_{\cdot *}),\  
\sum_{i}K(R_{i,*\cdot}\|P_{*\cdot}),\ K(R_{*,\cdot\cdot}\|P_{\cdot\cdot}),
\label{conv} \\ -K(R_{*,\cdot\cdot}\|P_{\cdot\cdot})+
\sum_{i}K(R_{i,\cdot\cdot}\|P_{\cdot\cdot})
\Big)\Big\}_{r_{i,jk}\ge 0,\ r_{*,**}=1} \bigg) \eerr
where $D(0)$ is the common core
\beq D(0)={\rm ConvCone}(\{(1,0,0,1),(0,1,0,1),
(0,0,1,0),(0,0,0,1),(-1,-1,0,-1)\}) \eeq

For the unlimited data case $d\rightarrow\infty$ 
\berr D_{0}^{c}(P)=\{(m_{0},m_{1},s,w)\ |\ s\ge 0\}\cap \label{unli} \\ \cap
\big[D_{0}(0)+{\rm ConvCone}(\{(
K(Q_{\cdot *}\|P_{\cdot *}),\ K(Q_{*\cdot}\|P_{*\cdot}),\
\ K(Q_{\cdot\cdot}\|P_{\cdot\cdot}),\ 0)\}_{Q})\big] \eerr
where $D_{0}(0)$ is the extended common core
\beq D_{0}(0)=D(0)+{\rm Cone}(\{(0,0,-1,1)\}) \eeq
and $Q$ runs over all $b_{0}\times b_{1}$ probability matrices.
\end{theorem}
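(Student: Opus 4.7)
The two characterizations (\ref{clos}) and (\ref{conv}) of $D^c(P)$ will be shown equivalent via convex duality, and the set $D^c(P)$ will be sandwiched between them: the inequality form by a converse using Theorem \ref{big}, and the convex-hull form by random-coding achievability. The unlimited-data statement (\ref{unli}) will follow by a tensor-power argument.

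\textbf{Converse direction.} For any attainable sextuple $(P,d,n_0,n_1,S,W)$, view its $d$-dimensional coordinate-wise distribution as the tensor product $P^{\otimes d}$. Applying Theorem \ref{big} together with the additivity Theorem \ref{indy} yields
\[ \ln W \ge \lambda_0\ln n_0 + \lambda_1\ln n_1 + \mu\ln S - d\cdot I(P,\lambda_0,\lambda_1,\mu) \]
for every admissible $(\lambda_0,\lambda_1,\mu)$. Dividing by $d$ and taking closure establishes the inclusion of $D^c(P)$ in the set defined by (\ref{clos}).

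\textbf{Achievability.} Fix a finite family $\{r_{i,jk}\}$ with $r_{*,**}=1$. I would construct a random bucketing code by creating, for each index $i$, $T_i$ independent buckets of ``type $i$'', where a type-$i$ bucket is specified by a random center sequence and contains every sequence whose empirical joint distribution with that center $\delta$-approximates the normalized conditional $\bar R_i=R_i/r_{i,**}$. Standard method-of-types estimates then give $p_{B_{0,i}*}\approx \exp(-d\,K(\bar R_{i,\cdot *}\|P_{\cdot *}))$, $p_{*B_{1,i}}\approx \exp(-d\,K(\bar R_{i,*\cdot}\|P_{*\cdot}))$, and probability $\approx \exp(-d\,K(\bar R_i\|P))$ that the special pair lands in any given type-$i$ bucket. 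Choosing the $T_i$ so that each type contributes comparably to the success probability, and setting $n_0=\lfloor e^{d m_0}\rfloor$, $n_1=\lfloor e^{d m_1}\rfloor$ with $(m_0,m_1,s,w)$ as in (\ref{conv}), a direct summation produces parameters matching (\ref{conv}) up to $e^{o(d)}$ factors. The generators of $D(0)$ are realized by trivial code modifications (padding with unused points, duplicating buckets, emptying all buckets to force $S=0$, etc.), giving $D(0)+{\rm Conv}(\ldots)\subseteq D^c(P)$.

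\textbf{Equivalence, unlimited data, and main obstacle.} The equivalence of (\ref{clos}) and (\ref{conv}) follows by Fenchel/Legendre duality: Lemma \ref{equivalent} rewrites $I(P,\lambda_0,\lambda_1,\mu)$ as a maximum over the very same family $\{r_{i,jk}\}$ parametrizing (\ref{conv}), and one checks that the admissibility region $\lambda_0,\lambda_1\le 1\le\lambda_0+\lambda_1$, $\mu\ge 0$ is the polar of the recession cone $D(0)$ (indeed, pairing each generator of $D(0)$ with $(-\lambda_0,-\lambda_1,\mu,1)$ recovers exactly those constraints); the two descriptions then coincide as supporting-hyperplane duals of the same closed convex set. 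The cone statement (\ref{unli}) follows by taking tensor powers, which by Theorem \ref{indy} scales the parameter vector by $k$ and, on disjoint data, admits the extra cone generator $(0,0,-1,1)$ coming from success amplification by independent repetitions. The main technical obstacle is the random-coding step: one must control, uniformly over the types $i$, the concentration of $p_{B_{0,i}*}$, $p_{*B_{1,i}}$, and of the number of buckets the special pair actually hits, while letting the type-matching tolerance $\delta$ shrink after $d\to\infty$. Everything else is convex analysis combined with the already-established Theorems \ref{big} and \ref{indy}.
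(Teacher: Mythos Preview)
Your overall plan---converse via Theorem \ref{big}, equivalence of (\ref{clos}) and (\ref{conv}) via duality, achievability via random coding, and the cone statement via amplification---matches the paper's. The converse and duality portions are essentially identical to what the paper does.

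The achievability construction, however, differs from the paper's and is underspecified at the key point. You propose random \emph{centers} and $\delta$-approximate joint types, generalizing the Section \ref{better} algorithm; but you do not say in which alphabet the center lives, nor how a single center determines both $B_{0,t}$ and $B_{1,t}$ so that $p_{B_{0,t}*}$, $p_{*B_{1,t}}$ and $p_{B_{0,t}B_{1,t}}$ simultaneously hit the three target exponents $K(\bar R_{i,\cdot *}\|P_{\cdot *})$, $K(\bar R_{i,*\cdot}\|P_{*\cdot})$, $K(\bar R_i\|P)$. The paper sidesteps this entirely: instead of random centers it \emph{partitions} the $d$ coordinates into blocks of sizes $d_{i,**}\approx r_{i,**}d$, one block per index $i$, and lets $B_0$ (resp.\ $B_1$) be the set of sequences whose \emph{exact} marginal type on block $i$ equals $\{d_{i,j*}\}_j$ (resp.\ $\{d_{i,*k}\}_k$). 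The only randomness is a uniform permutation of all $d$ coordinates, producing $T$ i.i.d.\ bucket pairs. With exact types the three relevant probabilities are closed-form multinomial products evaluated by Stirling, no tolerance $\delta$ is needed, and the term $K(R_{*,\cdot\cdot}\|P_{\cdot\cdot})$ appears naturally as the probability $U$ that the special pair has the correct \emph{global} joint type $\{d_{*,jk}\}$. The ``main technical obstacle'' you flag thus never arises in the paper's argument.

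For (\ref{unli}) the paper does not merely take tensor powers: it invokes Lemma \ref{sublinearplus} (concatenation supplies the extra cone direction $(0,0,-1,1)$) and then observes that the resulting dual constraint forces $\mu\le 1$, so Lemma \ref{equivalent} reduces the $\{R_i\}$ parametrization to a single probability matrix $Q$. Your sketch has the right idea but omits this last reduction, which is precisely what produces the single-$Q$ form in (\ref{unli}).
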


In light of theorem \ref{asy}, theorem \ref{indy} can be recast as
\begin{theorem}
For any probability matrices $P_{1},P_{2}\ $
$ D^{c}(P_{1}\times P_{2})=D^{c}(P_{1})+D^{c}(P_{2}) $
\end{theorem}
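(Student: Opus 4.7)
The strategy is to use the dual (half-space) form of $D^c(P)$ from Theorem \ref{asy} together with the additivity from Theorem \ref{indy}. Morally, Theorem \ref{asy} exhibits $I(P,\lambda_0,\lambda_1,\mu)$ as the support function of $D^c(P)$ in direction $(\lambda_0,\lambda_1,-\mu,-1)$, Theorem \ref{indy} says support functions add, and closed convex sets with equal support functions are equal.

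The forward inclusion $D^c(P_1)+D^c(P_2)\subseteq D^c(P_1\times P_2)$ is a direct calculation. Given $(m_0^{(i)},m_1^{(i)},s^{(i)},w^{(i)})\in D^c(P_i)$ for $i=1,2$, fix any admissible triple $\lambda_0,\lambda_1\le 1\le\lambda_0+\lambda_1$, $\mu\ge 0$. Summing the two defining inequalities of Theorem \ref{asy} and invoking Theorem \ref{indy} to combine the right-hand sides into $-I(P_1\times P_2,\lambda_0,\lambda_1,\mu)$ yields the Theorem \ref{asy} inequality for $P_1\times P_2$ at the componentwise sum; the constraint $s^{(1)}+s^{(2)}\ge 0$ is trivial.

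The reverse inclusion requires that the two sides have matching support functions and that both are closed. Here I use the alternative form (\ref{conv}) of Theorem \ref{asy}: $D^c(P)=D(0)+K(P)$, where $D(0)$ is a polyhedral cone and $K(P)$ is the convex hull of the continuous image of the compact simplex $\{r_{i,jk}\ge 0:r_{*,**}=1\}$, hence $K(P)$ is compact. Since $D(0)$ is a convex cone with $D(0)+D(0)=D(0)$, one gets $D^c(P_1)+D^c(P_2)=D(0)+\bigl(K(P_1)+K(P_2)\bigr)$, a Minkowski sum of a polyhedral cone and a compact convex set, which is closed. Its support function at $\xi=(\lambda_0,\lambda_1,-\mu,-1)$ with valid parameters is $h_{D^c(P_1)}(\xi)+h_{D^c(P_2)}(\xi)=I(P_1,\lambda_0,\lambda_1,\mu)+I(P_2,\lambda_0,\lambda_1,\mu)=I(P_1\times P_2,\lambda_0,\lambda_1,\mu)=h_{D^c(P_1\times P_2)}(\xi)$ by Theorem \ref{indy}; for $\xi$ outside the polar cone of $D(0)$ both support functions are $+\infty$, because both sets contain $D(0)$ as their recession cone. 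Agreement of support functions on two closed convex sets forces equality of the sets.

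The main friction point I expect is this topological step: ensuring that $D^c(P_1)+D^c(P_2)$ is genuinely closed, so that support-function matching yields set equality rather than only $D^c(P_1\times P_2)=\overline{D^c(P_1)+D^c(P_2)}$. The ``polyhedral cone plus compact convex set'' decomposition extracted from (\ref{conv}) is precisely what makes this step clean, so pulling that structure out of Theorem \ref{asy} and checking compactness of the parametrizing simplex is the only real work needed beyond the two cited theorems.
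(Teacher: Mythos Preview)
Your proposal is correct and is exactly the approach the paper intends: the paper gives no separate proof of this theorem, stating only that ``in light of theorem \ref{asy}, theorem \ref{indy} can be recast as'' the Minkowski-sum identity. You have supplied the details the paper omits, in particular the closedness of $D^c(P_1)+D^c(P_2)$ via the decomposition $D^c(P)=D(0)+K(P)$ from (\ref{conv}); one tiny caveat is that if some $p_{jk}=0$ the KL map is not continuous on the full simplex, but restricting to the face $\{r_{i,jk}=0\text{ whenever }p_{jk}=0\}$ preserves compactness and the argument goes through unchanged.
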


\section{Conclusion}

We consider the approximate nearest neighbor problem in a probabilistic
setting. Using several coordinates at once enables asymptotically better
approximate nearest neighbor algorithms than using them one at a time.
The performance is bounded by, and tends to, a newly defined bucketing
information function. Thus bucketing coding and
information theory play the same role for the approximate nearest
neighbor problem that Shannon's coding and information theory play
for communication.





\begin{thebibliography}{1}

\bibitem{NA}
N.~Alon
Private Communication.

\bibitem{AI06}
A.~Andoni, P.~Indyk
{\em Near-Optimal Hashing Algorithms for Approximate Nearest Neighbor in
High Dimensions}
FOCS 2006.

\bibitem{Bro98}
A.~Broder.
{\em Identifying and Filtering Near-Duplicate Documents}
Proc. FUN, 1998.

\bibitem{DIIM04}
M.~Datar, P.~Indyk, N.~Immorlica and V.~Mirrokni
{\em Locality-Sensitive Hashing Scheme Based on p-Stable Distributions}
Proc. Sympos. on Computational Geometry, 2004.

\bibitem{GSZ01}
C.Gennaro, P.Savino and P.Zezula
{\em Similarity Search in Metric Databases through Hashing}
Proc. ACM workshop on multimedia, 2001.

\bibitem{IM98}
P.~Indyk and R.~Motwani.
{\em Approximate Nearest Neighbor: Towards Removing the Curse
of Dimensionality}
Proc. 30th Annu. ACM Sympos. Theory Comput., 1998.

\bibitem{KWZ95}
R.M. Karp, O. Waarts, and G. Zweig.
{\em The Bit Vector Intersection Problem}
Proc. 36th Annu. IEEE Sympos. Foundations of Computer Science,
pp. 621-630, 1995.

\bibitem{MNP06}
R. Motwani, A. Naor and R. Panigrahy
{\em Lower Bounds on Locality Sensitive Hashing}
SCG'06

\bibitem{MD1}
{\em A Heterogeneous High Dimensional Approximate Nearest
Neighbor Algorithm}
To be Published.

\end{thebibliography}
%

\appendices

\section{Bucketing Information Proofs} \label{infoproofs}

Proof of {\bf Lemma \ref{equivalent}}.
\begin{proof}
Lemma (\ref{fun}) implies that
\beq K(R_{*,\cdot\cdot}\| P_{\cdot\cdot})\le
\sum_{i=0}^{b_{0}b_{1}-1}K(R_{i,\cdot\cdot}\| P_{\cdot\cdot}) \label{again}\eeq
so for $0\le\mu\le 1$
\beq (1-\mu)K(R_{*,\cdot\cdot}\|P_{\cdot\cdot})-
\sum_{i=0}^{b_{0}b_{1}-1}K(R_{i,\cdot\cdot}\|P_{\cdot\cdot}) \le
-\mu\sum_{i=0}^{b_{0}b_{1}-1}K(R_{i,\cdot\cdot}\|P_{\cdot\cdot}) \eeq
and only one $i$ is necessary.
The connection between definition \ref{buck} and (\ref{def2}) is through
$ r_{i}=r_{i,**} $ ,
$ q_{i,jk}=\frac{r_{i,jk}}{r_{i,**}} $
\berr I(P,\lambda_{0},\lambda_{1},\mu)=\max_{\footnotesize
\barr{cc}\{r_{i},Q_{i}\}_{i} \\
r_{*}=1\earr} \Bigg[\sum_{i=0}^{b_{0}b_{1}-1} r_{i}\Big[
\lambda_{0}K(Q_{i,\cdot *}\|P_{\cdot *})+
\lambda_{1}K(Q_{i,*\cdot}\|P_{*\cdot})+ \\ +
(1-\mu)K\Big(\sum_{i}r_{i}Q_{i,\cdot\cdot}\Big\| P_{\cdot\cdot}\Big)
-K(Q_{i,\cdot\cdot}\|P_{\cdot\cdot}) \Big] \Bigg] \eerr
The set $G$ is $b_{0}b_{1}$ dimensional, so by
Caratheodory's theorem any point on the boundary of its convex hull is a 
convex combination of $b_{0}b_{1}$ $G$ points.
\end{proof}

Proof of {\bf lemma \ref{special}}.
\begin{proof}
Non-negativity follows by taking $Q=P$. Monotonicity ,convexity
and (\ref{obv}) are by definition.

When $0\le\mu\le 1$  (\ref{single}) is valid and (\ref{izero}) is clear.
When $\mu\ge 1$
\beq I(P,\lambda_{0},\lambda_{1},\mu)\le\max_{R}\sum_{i=0}^{b_{0}b_{1}-1}
\Bigg[\lambda_{0}K(R_{i,\cdot *}\|P_{\cdot *})+ 
\lambda_{1}K(R_{i,*\cdot}\|P_{*\cdot})-
K(R_{i,\cdot\cdot}\|P_{\cdot\cdot}) \Bigg]\eeq
so direction $\Leftarrow$ of (\ref{izero}) is true.
On the other hand assume that for some $Q$
\beq K(Q_{\cdot\cdot}\|P_{\cdot\cdot})<
\lambda_{0}K(Q_{\cdot *}\|P_{\cdot *}) +
\lambda_{1}K(Q_{*\cdot}\|P_{*\cdot}) \eeq
Inserting $r_{0,jk}=\epsilon q_{jk},\ r_{1,jk}=p_{jk}-\epsilon q_{jk}$
into definition \ref{buck} gives
\berr I(P,\lambda_{0},\lambda_{1},\mu)\ge
\epsilon\left[\lambda_{0}K(Q_{\cdot *}\|P_{\cdot *}) +
\lambda_{1}K(Q_{*\cdot}\|P_{*\cdot}) - K(Q_{\cdot\cdot}\|P_{\cdot\cdot})
\right] + \\ +
(1-\epsilon)\left[\lambda_{0}K(\tilde{P}_{\cdot *}\|P_{\cdot *}) +
\lambda_{1}K(\tilde{P}_{*\cdot}\|P_{*\cdot}) - 
K(\tilde{P}_{\cdot\cdot}\|P_{\cdot\cdot}) \right] \eerr
where $\tilde{P}=(P-\epsilon Q)/(1-\epsilon)=P+\epsilon(P-Q)/(1-\epsilon)$.
The Kullback-Leibler divergence between $\tilde{P}$ and $P$ is second order
in $\epsilon$, and the same holds for their marginal vectors. Hence for a small
$\epsilon>0$ $I(P,\lambda_{0},\lambda_{1},\mu)>0$, and the proof of
(\ref{izero}) is done.

Lemma \ref{fun} implies
\beq K(R_{i,\cdot *}\|P_{\cdot*}),K(R_{i,*\cdot}\|P_{*\cdot})\le
K(R_{i,\cdot\cdot}\|P_{\cdot\cdot}) \eeq
so (\ref{izero2}) follows from (\ref{izero}).

Now to $\lambda_{0}=\lambda_{1}=1$. We want to maximize
\beqq \sum_{i}\big[K(R_{i,\cdot *}\|P_{\cdot *})+K(R_{i,*\cdot}\|P_{*\cdot})
-K(R_{i,\cdot\cdot}\|P_{\cdot\cdot})\big]=
\sum_{jk}r_{*,jk}\ln\frac{p_{jk}}{p_{j*}p_{*k}}-
\sum_{ijk}r_{i,jk}\ln\frac{r_{i,**}r_{i,jk}}{r_{i,j*}r_{i,*k}} \eeqq
The rightmost sum is nonnegative, and for any $\{r_{*,jk}\}_{jk}$ it can be
made $0$ by choosing
\beq r_{i,jk}=\left\{\barr{ll}r_{*,jk}&i=j+b_{0}k\\0&{\rm otherwise}\earr
\right. \eeq
Hence we want to maximize
\beq \sum_{jk}r_{*,jk}\ln\frac{(p_{jk})^{\mu}}{p_{j*}p_{*k}}+
(1-\mu)\sum_{jk}r_{*,jk}\ln r_{*,jk} \eeq
When $0\le\mu\le 1$ both sums can be simultaneously maximized by concentrating
$r$ in one place.
When $\mu\ge 1$ the maximized function is concave in $\{r_{*,jk}\}_{jk}$,
and Lagrange multipliers reveal the optimal choice
\beq r_{*,jk}=\frac{\left(\frac{(p_{jk})^{\mu}}{p_{j*}p_{*k}}\right)^
{\frac{1}{\mu-1}}}{\sum_{\tilde{j}\tilde{k}}
\left(\frac{(p_{\tilde{j}\tilde{k}})^{\mu}}{p_{\tilde{j}*}p_{*\tilde{k}}}
\right)^{\frac{1}{\mu-1}}}\eeq
\end{proof}

Proof of {\bf theorem \ref{indy}}.
\begin{proof}
Obviously $I(P_{1}\times P_{2},\lambda_{0},\lambda_{1},\mu)\ge
I(P_{1},\lambda_{0},\lambda_{1},\mu)+I(P_{2},\lambda_{0},\lambda_{1},\mu)$.
The other direction is the challenge. Denote $P=P_{1}\times P_{2}$:
\beq p_{j_{1}k_{1}j_{2}k_{2}}=p_{1,j_{1}k_{1}}p_{2,j_{2}k_{2}} \eeq
For any $\{r_{i,j_{1}j_{2}k_{1}k_{2}}\}_{i,j_{1}j_{2}k_{1}k_{2}}$
\berrr (\mu-1)K(R_{*,\cdot\cdot\cdot\cdot}\|P_{\cdot\cdot\cdot\cdot})+
\sum_{i}K(R_{i,\cdot\cdot\cdot\cdot}\|P_{\cdot\cdot\cdot\cdot})=\\=
(\mu-1)\sum_{j_{1}k_{1}}r_{*,j_{1}k_{1}**}\ln\frac{r_{*,j_{1}k_{1}**}}
{p_{1,j_{1}k_{1}}}+
\sum_{i,j_{1}k_{1}}r_{i,j_{1}k_{1}**}\ln\frac{r_{i,j_{1}k_{1}**}}
{r_{i,****}p_{1,j_{1}k_{1}}}+\\+
(\mu-1)\sum_{j_{1}k_{1}j_{2}k_{2}}r_{*,j_{1}k_{1}j_{2}k_{2}}
\ln\frac{r_{*,j_{1}k_{1}j_{2}k_{2}}}{r_{*,j_{1}k_{1}**}p_{2,j_{2}k_{2}}}+
\sum_{i,j_{1}k_{1}j_{2}k_{2}}r_{i,j_{1}k_{1}j_{2}k_{2}}
\ln\frac{r_{i,j_{1}k_{1}j_{2}k_{2}}}{r_{i,j_{1}k_{1}**}p_{2,j_{2}k_{2}}}
\eerrr
By definition
\berrr (\mu-1)\sum_{j_{1}k_{1}}r_{*,j_{1}k_{1}**}\ln\frac{r_{*,j_{1}k_{1}**}}
{p_{1,j_{1}k_{1}}}+
\sum_{i,j_{1}k_{1}}r_{i,j_{1}k_{1}**}\ln\frac{r_{i,j_{1}k_{1}**}}
{r_{i,****}p_{1,j_{1}k_{1}}} \ge \\ \ge
\lambda_{0}\sum_{i,j_{1}}r_{i,j_{1}***}\ln\frac{r_{i,j_{1}***}}
{r_{i,****}p_{1,j_{1}*}}+
\lambda_{1}\sum_{i,k_{1}}r_{i,*k_{1}**}\ln\frac{r_{i,*k_{1}**}}
{r_{i,****}p_{1,*k_{1}}} - I(P_{1},\lambda_{0},\lambda_{1},\mu) \eerrr
\berrr (\mu-1)\sum_{j_{2}k_{2}}r_{*,j_{1}k_{1}j_{2}k_{2}}\ln\frac
{r_{*,j_{1}k_{1}j_{2}k_{2}}}{r_{*,j_{1}k_{1}**}p_{2,j_{2}k_{2}}}+
\sum_{i,j_{2}k_{2}}r_{i,j_{1}k_{1}j_{2}k_{2}}\ln\frac
{r_{i,j_{1}k_{1}j_{2}k_{2}}}{r_{i,j_{1},k_{1}**}p_{2,j_{2}k_{2}}} \ge \\ \ge
\lambda_{0}\sum_{i,j_{2}}r_{i,j_{1}k_{1}j_{2}*}\ln\frac
{r_{i,j_{1}k_{1}j_{2}*}}{r_{i,j_{1}k_{1}**}p_{2,j_{2}*}}+
\lambda_{1}\sum_{i,k_{2}}r_{i,j_{1}k_{1}*k_{2}}\ln\frac
{r_{i,j_{1}k_{1}*k_{2}}}{r_{i,j_{1},k_{1}**}p_{2,*k_{2}}} -\\-
r_{*,j_{1}k_{1}**}I(P_{2},\lambda_{0},\lambda_{1},\mu) \eerrr
so with help from lemma \ref{fun}
\berrr (\mu-1)\sum_{j_{1}k_{1}j_{2}k_{2}}r_{*,j_{1}k_{1}j_{2}k_{2}}\ln\frac
{r_{*,j_{1}k_{1}j_{2}k_{2}}}{r_{*,j_{1}k_{1}**}p_{2,j_{2}k_{2}}}+
\sum_{i,j_{1}k_{1}j_{2}k_{2}}r_{i,j_{1}k_{1}j_{2}k_{2}}\ln\frac
{r_{i,j_{1}k_{1}j_{2}k_{2}}}{r_{i,j_{1},k_{1}**}p_{2,j_{2}k_{2}}} \ge \\ \ge
\lambda_{0}\sum_{i,j_{1}j_{2}}r_{i,j_{1}*j_{2}*}\ln\frac
{r_{i,j_{1}*j_{2}*}}{r_{i,j_{1}***}p_{2,j_{2}*}}+
\lambda_{1}\sum_{i,k_{1}k_{2}}r_{i,*k_{1}*k_{2}}\ln\frac
{r_{i,*k_{1}*k_{2}}}{r_{i,*,k_{1}**}p_{2,*k_{2}}} -
I(P_{2},\lambda_{0},\lambda_{1},\mu) \eerrr
Together
\berrr (\mu-1)K(R_{*,\cdot\cdot\cdot\cdot}\|P_{\cdot\cdot\cdot\cdot})+
\sum_{i}K(R_{i,\cdot\cdot\cdot\cdot}\|P_{\cdot\cdot\cdot\cdot}) \ge \\ \ge
\lambda_{0}\sum_{i}K(R_{i,\cdot *\cdot *}\|P_{\cdot *\cdot *})+
\lambda_{1}\sum_{i}K(R_{i,*\cdot *\cdot}\|P_{*\cdot *\cdot})-
I(P_{1},\lambda_{0},\lambda_{1},\mu)-I(P_{2},\lambda_{0},\lambda_{1},\mu)
\eerrr
Notice that we have used the fact that for $0\le\mu\le 1$ there is only
one $i$.
\end{proof}

\section{Bucketing Codes and Information Proofs} \label{codesproofs}

Proof of {\bf theorem \ref{big}}.
\begin{proof}
Without restricting generality let $\nu=1$. Let
$(B_{0,0},B_{1,0}),\cdots$,$(B_{0,T-1},B_{1,T-1})$ be subset pairs. Denote
\beq B_{i}=B_{0,i}\times B_{1,i}\ \backslash\ \bigcup_{t=0}^{i-1}
B_{0,t}\times B_{1,t} \eeq
so the success probability is
$ S=\sum_{i}p_{B_{i}} $
Insert
\beq r_{i,jk}=\left\{ \barr{ll}\frac{p_{jk}}{S} &
(j,k)\in B_{i}\\ 0 & {\rm otherwise} \earr \right.  \eeq
into definition \ref{buck}. Lemma \ref{fun} implies
\beq K(R_{i,\cdot *}\|P_{\cdot *})=\sum_{j\in B_{0,i}}r_{i,j*}\ln\frac
{r_{i,j*}}{r_{i,**}p_{j*}} \ge -r_{i,**}\ln p_{B_{0,i}*} \eeq
Similarly
\beq K(R_{i,*\cdot}\|P_{\cdot *}) \ge -r_{i,**}\ln p_{*B_{1,i}} \eeq
\beq \sum_{i}\left[\lambda_{0}K(R_{i,\cdot *}\|P_{\cdot *})+
\lambda_{1}K(R_{i,*\cdot}\|P_{*\cdot})
\right] \ge -\sum_{i}r_{i,**}\left(\lambda_{0}\ln p_{B_{0,i}*}+
\lambda_{1}\ln p_{*B_{1,i}} \right) \eeq
Recall that the work is $W=\sum_{i}W_{i}$ where
\beq W_{i}=\max\left(n_{0}p_{B_{0,i}*},\ n_{1}p_{*B_{1,i}},
\ n_{0}p_{B_{0,i}*}n_{1}p_{*B_{0,i}}\right) \eeq
Our parameters satisfy
\beq (\lambda_{0},\lambda_{1})\in{\rm Conv}(\{(1,0),(0,1),(1,1)\}) \eeq
hence
\beq \ln W_{i}\ge \lambda_{0}\ln(n_{0}p_{B_{0,i}*})+
\lambda_{1}\ln(n_{1}p_{*B_{1,i}}) \eeq
\beq -\lambda_{0}\ln p_{B_{0,i}*}-\lambda_{1}\ln p_{*B_{1,i}} \ge
\lambda_{0}\ln n_{0}+\lambda_{1}\ln n_{1} - \ln W_{i} \eeq
Clearly
\beq K(R_{*,\cdot\cdot}\|P_{\cdot\cdot})=-\ln S  \eeq
\beq K(R_{i,\cdot\cdot}\|P_{\cdot\cdot})=-\sum_{ijk}r_{i,jk}
\ln(r_{i,**}S)= -\ln S-\sum_{i}r_{i,**}\ln r_{i,**} \eeq
Now all the pieces come together:
\berrr I(P,\lambda_{0},\lambda_{1},\mu)\ge
\lambda_{0}\ln n_{0}+\lambda_{1}\ln n_{1}-
\sum_{i}r_{i,**}\ln W_{i}+
\mu\ln S+\sum_{i}r_{i,**}\ln r_{i,**}=\\=
\lambda_{0}\ln n_{0}+\lambda_{1}\ln n_{1}+\mu\ln S+
\sum_{i}r_{i,**}\ln\frac{r_{i,**}}{W_{i}} \eerrr
Another call of duty for lemma \ref{fun} produces
\beq \sum_{i}r_{i,**}\ln\frac{r_{i,**}}{W_{i}}\ge -\ln W \eeq
\end{proof}

\begin{lemma} \label{sublinear}
Suppose that
\beq (P_{1},d,n_{0,1},n_{1,1},S_{1},W_{1}),\ 
(P_{2},d,n_{0,2},n_{1,2},S_{2},W_{2}) \eeq
are attainable. Then
\beq (P_{1}\times P_{2},d,n_{0,1}n_{0,2},n_{1,1}n_{1,2},S_{1}S_{2},W_{1}W_{2})
\eeq
is attainable, where $\times$ is tensor product. In particular when
$ P_{1}=P_{2}=P $
for any $k_{1},k_{2}\ge 0$ we attain
\beq (P,(k_{1}+k_{2})d,n_{0,1}^{k_{1}}n_{0,2}^{k_{2}},
n_{1,1}^{k_{1}}n_{1,2}^{k_{2}},S_{1}^{k_{1}}S_{2}^{k_{2}},
W_{1}^{k_{1}}W_{2}^{k_{2}}) \eeq
In particular the closure of the log-attainable set $D^{c}(P)$ is convex.
\end{lemma}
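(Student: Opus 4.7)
My plan is to realize the tensor-product bucketing code explicitly and verify that success probability and work factor multiplicatively. Let the codes attaining $(P_1,d,n_{0,1},n_{1,1},S_1,W_1)$ and $(P_2,d,n_{0,2},n_{1,2},S_2,W_2)$ be $\{(B_{0,s}^{(1)},B_{1,s}^{(1)})\}_{s=0}^{T_1-1}$ and $\{(B_{0,t}^{(2)},B_{1,t}^{(2)})\}_{t=0}^{T_2-1}$. I take as the product code the $T_1T_2$ subset pairs
\[
\bigl(B_{0,s}^{(1)}\times B_{0,t}^{(2)},\ B_{1,s}^{(1)}\times B_{1,t}^{(2)}\bigr),\qquad 0\le s<T_1,\ 0\le t<T_2,
\]
on $d$-dimensional data over the product alphabet, with expected numbers of points $n_{0,1}n_{0,2}$ and $n_{1,1}n_{1,2}$.

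Success probability factors easily. The special pair drawn from $P_1\times P_2$ splits into two independent special pairs, one distributed as the special pair under $P_1$ and one as under $P_2$. Being covered by some combined bucket $(s,t)$ is precisely the conjunction of being covered by some $s$-bucket in code~1 and some $t$-bucket in code~2, and by independence the joint probability is exactly $S_1S_2$.

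For the work, marginal probabilities factor under $P_1\times P_2$, so writing $a_s=n_{0,1}p^{(1)}_{B_{0,s}^{(1)}*}$, $b_s=n_{1,1}p^{(1)}_{*B_{1,s}^{(1)}}$ and analogously $a'_t,b'_t$, the $(s,t)$-contribution to work is $\max(a_sa'_t,\ b_sb'_t,\ a_sb_sa'_tb'_t)$. The only non-mechanical step is the elementary inequality
\[
\max(\alpha\alpha',\ \beta\beta',\ \alpha\beta\alpha'\beta')\ \le\ \max(\alpha,\beta,\alpha\beta)\cdot\max(\alpha',\beta',\alpha'\beta'),
\]
which I verify by checking each of the three candidates on the left against the product of appropriate maxima on the right. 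Summing over $(s,t)$ then factors into $W_1\cdot W_2$, as claimed.

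For the "in particular" specialization, iterating the first part with $P_1=P_2=P$ yields $(P^{\otimes k},d,n_0^{k},n_1^{k},S^{k},W^{k})$ attainable from $(P,d,n_0,n_1,S,W)$ for any integer $k\ge 0$. Combining the $k_1$-th tensor power of code~1 with the $k_2$-th tensor power of code~2 gives $(P^{\otimes(k_1+k_2)},d,n_{0,1}^{k_1}n_{0,2}^{k_2},n_{1,1}^{k_1}n_{1,2}^{k_2},S_1^{k_1}S_2^{k_2},W_1^{k_1}W_2^{k_2})$, and regrouping the product alphabet as $k_1+k_2$ additional $P$-coordinates reinterprets this as a code for $P$ on dimension $(k_1+k_2)d$. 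Convexity of $D^{c}(P)$ is then immediate: the integer weights $k_1,k_2$ realize every rational convex combination of two log-attainable points, and taking closure supplies the remaining real convex combinations. I do not anticipate any serious obstacle; the elementary max-inequality is the only step requiring even a moment's thought.
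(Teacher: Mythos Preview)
Your proof is correct and is precisely the approach the paper intends: its entire proof reads ``Tensor product the codes.'' You have supplied the details the paper omits, including the one non-automatic step (the max-inequality for the work) and the alphabet-regrouping that turns a $P^{\otimes(k_1+k_2)}$-code in dimension $d$ into a $P$-code in dimension $(k_1+k_2)d$.
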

\begin{proof}
Tensor product the codes.
\end{proof}

\begin{lemma} \label{sublinearplus}
Suppose that
\beq (P,d_{1},n_{0},n_{1},S_{1},W_{1}),\ (P,d_{2},n_{0},n_{1},S_{2},W_{2}) \eeq
are attainable. Then
\beq (P,d_{1}+d_{2},n_{0},n_{1},S_{1}+S_{2}-S_{1}S_{2},W_{1}+W_{2}) \eeq
is attainable. In particular for any $S_{1}\le\tilde{S}_{1}\le 1$
\beq (\ln n_{0},\ln n_{1},-\ln S_{1}/\tilde{S}_{1},\ln W_{1}/\tilde{S}_{1})
\in D_{0}^{c}(P) \eeq
\end{lemma}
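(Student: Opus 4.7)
The plan is to split the lemma into its two clauses: the first, combining two codes of possibly different dimensions sharing $(n_0,n_1)$, is a direct construction; the second (``in particular'') emerges by iterating the first together with Lemma \ref{sublinear} and passing to a limit inside the closed cone $D_0^c(P)$. For the first clause I would take the disjoint union of the two bucket lists, lifting each bucket to the full $d_1+d_2$ coordinates by appending wildcards on the coordinates it does not act upon: for $t<T_1$, $(B_{0,t}^{(1)}\times\{0,\ldots,b_0-1\}^{d_2},\,B_{1,t}^{(1)}\times\{0,\ldots,b_1-1\}^{d_2})$, and symmetrically with the wildcards on the first $d_1$ coordinates for the second-code buckets. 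Because the iid distribution factorizes across the $d_1/d_2$ split, the wildcard marginal equals $1$, so each lifted bucket contributes the same quantity to $W$ as its unlifted twin, giving total work $W_1+W_2$; the events ``captured through a $\mathcal{C}_i$-bucket'', $i=1,2$, depend on disjoint coordinate blocks and are therefore independent, so the combined success is $1-(1-S_1)(1-S_2)=S_1+S_2-S_1S_2$.

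For the ``in particular'' clause, I would apply the first clause $k_2$ times with $\mathcal{C}_1=\mathcal{C}_2$ to the $k_1$-fold tensor power of the base code (Lemma \ref{sublinear}), producing the attainable tuple
\[
(P,\ k_1 k_2 d_1,\ n_0^{k_1},\ n_1^{k_1},\ 1-(1-S_1^{k_1})^{k_2},\ k_2 W_1^{k_1}).
\]
Because $D_0(P)$ is a cone, the corresponding log-vector can be rescaled to give
\[
u_{k_1,k_2}=\bigl(\ln n_0,\ \ln n_1,\ -\tfrac{1}{k_1}\ln\bigl(1-(1-S_1^{k_1})^{k_2}\bigr),\ \tfrac{\ln k_2}{k_1}+\ln W_1\bigr)\in D_0(P).
\]
I would then pick $k_2=\lfloor\tilde{S}_1^{-k_1}\rfloor$ and let $k_1\to\infty$: the fourth coordinate converges to $-\ln\tilde{S}_1+\ln W_1=\ln(W_1/\tilde{S}_1)$ at once. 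Writing $a:=S_1/\tilde{S}_1\in(0,1]$, one has $k_2 S_1^{k_1}\to a^{k_1}$, so $1-(1-S_1^{k_1})^{k_2}\sim a^{k_1}$ when $a<1$ and tends to the constant $1-e^{-1}$ when $a=1$; in either case dividing its logarithm by $k_1$ yields $-\ln a=-\ln(S_1/\tilde{S}_1)$, and the limit sits in $D_0^c(P)$ by closedness.

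The main obstacle I expect is the asymptotic bookkeeping for the third coordinate: the linearization $1-(1-x)^m\approx mx$ is valid only when $mx\to 0$, which fails at the boundary $\tilde{S}_1=S_1$. At that endpoint one must carry the full $1-e^{-1}$ limit through and lean on the extra $1/k_1$ prefactor to make its contribution vanish, correctly giving $-\ln 1=0$. A smaller nuisance is the integer rounding $k_2=\lfloor\tilde{S}_1^{-k_1}\rfloor$, but its relative error is $O(\tilde{S}_1^{k_1})$ and washes out after division by $k_1$.
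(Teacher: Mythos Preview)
Your proposal is correct and follows essentially the same route as the paper: concatenate the two bucket lists (lifted by wildcards) for the first clause, then iterate on the $k$-fold tensor power with $T=\lceil\tilde S_1^{-k}\rceil$ repetitions and let $k\to\infty$ for the second. The paper compresses all of this into three lines and does not spell out the boundary case $\tilde S_1=S_1$ or the rounding error; your more careful asymptotic bookkeeping for the third coordinate (including the $1-e^{-1}$ limit) fills in exactly what the paper leaves implicit.
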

\begin{proof}
Concatenating the codes shows the first claim.
Concatenating $T$ times the $k$'th tensor power of the first code shows that
\beq \left(P,Td_{1}^{k},n_{0}^{k},n_{1}^{k},1-\left(1-S_{1}^{k}\right)^{T},
TW_{1}^{k}\right) \eeq
is attainable. Taking
$ T=\lceil \tilde{S}_{1}^{-k}\rceil $
and letting $k\rightarrow\infty$
finishes the proof.
\end{proof}

Proof of {\bf theorem \ref{asy}}.
\begin{proof}
First let us show that the two representations are equivalent.
Denote the right hand side of (\ref{conv}) by $E$. It is the dual of its dual:
\berrr E=\{(m_{0},m_{1},s,w)\ |\ \alpha_{0}m_{0}+\alpha_{1}m_{1}-\beta s-
\gamma w\le 1\\ \forall \alpha_{0},\alpha_{1},\beta,\gamma,R\ \ 
{\rm such\ that}\ \ \alpha_{0},\alpha_{1}\le\gamma\le\alpha_{0}+\alpha_{1},
\ \beta,\gamma\ge 0,\\
\alpha_{0}\sum_{i}K(R_{i,\cdot *}\|P_{\cdot *})+
\alpha_{1}\sum_{i}K(R_{i,*\cdot}\|P_{*\cdot})+
(\gamma-\beta)K(R_{*,\cdot\cdot}\|P_{\cdot\cdot})-
\gamma\sum_{i}K(R_{i,\cdot\cdot}\|P_{\cdot\cdot})\le 1\} \eerrr
When $\gamma=0$ it forces $\alpha_{0}=\alpha_{1}=0$ and we are left with 
$-\beta s\le 1$ for all $\beta\ge 0$, i.e. $s\ge 0$.
When $\gamma>0$ we can divide by it, denote
$\lambda_{0}=\alpha_{0}/\gamma, \lambda_{1}=\alpha_{1}/\gamma,
\mu=\beta/\gamma$ and find that $1/\gamma\ge I$ so $E$ equals the right hand
side of (\ref{clos}).

Theorem \ref{big} implies that $D^{c}(P)\subset E$. We will prove the inverse
inclusion by construction. The single big bags pair code
\beq B_{0}=\{0,1,\ldots,b_{0}-1\},\ B_{1}=\{0,1,\ldots,b_{1}-1\} \eeq
shows that $\ D(0)\subset D(P) $ . Now let
$\{r_{i,jk}\}_{ijk}$ attain the bucketing information value $I$.
For dimension $d$ choose integers $\{d_{i,jk}\}_{ijk}$ such that
$ d_{*,**}=d $ and
\beq r_{i,jk}d-1<d_{i,jk}<r_{i,jk}d+1 \eeq

Let us define a bucket pair
\beq B_{0,0}=\left\{x_{0}\ \left|\ \forall ij\ \sum_{l=c_{i}+1}^{c_{i+1}}
(x_{0,l}==j)\ =\ d_{i,j*} \right.\right\} \eeq
\beq B_{0,1}=\left\{x_{1}\ \left|\ \forall ik\ \sum_{l=c_{i}+1}^{c_{i+1}}
(x_{1,l}==k)\ =\ d_{i,*k} \right.\right\} \eeq
where
$ c_{i}=\sum_{l=0}^{i-1}d_{i,**} $
In words we want $x_{0}$ to contain exactly $d_{0,j*}$ $j$-values in its first
$d_{0,**}$ coordinates, etc. The bucket size is
\beq p_{B_{0,0}*}=\prod_{i}\left[\frac{d_{i,**}!}{\prod_{j}d_{i,j*}!}
\prod_{j}p_{j*}^{d_{i,j*}}\right] \eeq
\beq p_{*B_{0,1}}=\prod_{i}\left[\frac{d_{i,**}!}{\prod_{k}d_{i,*k}!}
\prod_{k}p_{*k}^{d_{i,*k}}\right] \eeq
Let us add $T-1$ similar buckets. They are generated by randomly permuting
the coordinates $1,2,\ldots,d$. Let
$ n_{0}=1/p_{B_{0,0}*} $ ,
$ n_{1}=1/p_{*B_{0,1}} $
so that the work is
$ W=T $ .
A lower bound of the average success probability of this random bucketing code
is
\beq  {\rm E}[S]\ge U\left[1-(1-V/U)^{T}\right] \eeq
where
\beq U=\frac{d!}{\prod_{jk}d_{*,jk}!}\prod_{jk}p_{jk}^{d_{*,jk}} \eeq
is the probability that the special pair obtains coordinate pair $(j,k)$
exactly $d_{*,jk}$ times, and
\beq V=\prod_{i}\left[\frac{d_{i,**}!}{\prod_{jk}d_{i,jk}!}
\prod_{jk}p_{jk}^{d_{i,jk}}\right] \eeq
is the probability that the special pair obtains coordinate pair $(j,k)$
exactly $d_{i,jk}$ times in coordinate subset number $i$.
Of course there exists a deterministic code at least as successful as the
average code.

It is reasonable to take
$ T=\lceil U/V\rceil $ .
Stirling's approximation implies
\beq \lim_{d\rightarrow\infty}\frac{1}{d}\ln n_{0}=
\sum_{ij}r_{i,j*}\ln\frac{r_{i,j*}}{r_{i,**}p_{j*}} \eeq
\beq \lim_{d\rightarrow\infty}\frac{1}{d}\ln n_{1}=
\sum_{ik}r_{i,*k}\ln\frac{r_{i,*k}}{r_{i,**}p_{*k}} \eeq
\beq \lim_{d\rightarrow\infty}\frac{-1}{d}\ln U=
\sum_{jk}r_{*,jk}\ln\frac{r_{*,jk}}{p_{jk}} \eeq
\beq \lim_{d\rightarrow\infty}\frac{-1}{d}\ln V=
\sum_{ijk}r_{i,jk}\ln\frac{r_{i,jk}}{r_{i,**}p_{jk}} \eeq
Hence
\berr  \liminf_{d\rightarrow\infty}\frac{1}{d}(\lambda_{0}\ln n_{0}+
\lambda_{1}\ln n_{1}+\mu\ln S-\ln W) \ge \\ \ge
\lim_{d\rightarrow\infty}\frac{1}{d}(\lambda_{0}\ln n_{0}+
\lambda_{1}\ln n_{1}+(\mu-1)\ln U+\ln V)=I \eerr

There remains the unlimited data formula (\ref{unli}).
Lemmas \ref{sublinearplus} shows that
\beq D_{0}^{c}(P)=\tilde{D}_{0}(P)\cap\{(m_{0},m_{1},s,w\ |\ s\ge 0\} \eeq
\beq \tilde{D}_{0}(P)=D_{0}^{c}(0)+{\rm Cone}(\{(0,0,-1,1)\}) \eeq
Clearly $\tilde{D}_{0}(P)$ is convex, contains the origin, and any point
$(\alpha_{0},\alpha_{1},\beta,\gamma)$ in its dual satisfies
$ \beta\le\gamma $ . Hence $\mu=\beta/\gamma\le 1$
so by lemma \ref{equivalent} only one $i$ term is needed, as long as we use
the full $D_{0}(0)$.
\end{proof}

\end{spacing}

\end{document}